\documentclass{article}
\usepackage[letterpaper, portrait, margin=1in]{geometry}
\usepackage{graphicx} % Required for inserting images

\usepackage{authblk}

\usepackage[ruled]{algorithm2e}

\usepackage{xcolor}
\definecolor{teal}{HTML}{008080}

\usepackage{pgfplots}
\usepackage{comment}

\usepackage{hyperref}
\hypersetup{
   colorlinks=true,
   citecolor=teal   
}

\usepackage{amsmath}
\usepackage{amssymb, amsthm}

\usepackage{algorithm2e} % for algorithms

\newtheorem{theorem}{Theorem}
\newtheorem{lemma}{Lemma}
\newtheorem{corollary}{Corollary}

\theoremstyle{definition}
\newtheorem{example}{Example}
\newtheorem{definition}{Definition}

\newcommand{\client}{\mathcal{C}}
\newcommand{\server}{\mathcal{S}}
\newcommand{\analyst}{\mathcal{T}}

\allowdisplaybreaks[1] % To allow multiline equations over multiple pages. For amssmath package.

\usepackage{tikz}

\usetikzlibrary{quantikz2}

\usetikzlibrary{chains,positioning}
\tikzset{>=latex}

\newcommand{\descr}[1]{\vspace{0.2cm} \noindent \textbf{#1}}

\usepackage[textsize=scriptsize, textwidth=2.1cm]{todonotes}
\setuptodonotes{color=green!40}

\title{Answering Counting Queries with Differential Privacy on a Quantum Computer}

\author[1]{Arghya Mukherjee}
\author[1]{Hassan Jameel Asghar}
\author[2]{Gavin K. Brennen}
\affil[1]{School of Computing, Macquarie University, Australia\\
\texttt{hassan.asghar@mq.edu.au},  \texttt{arghya.mukherjee@students.mq.edu.au}} 
\affil[2]{School of Mathematical and Physical Sciences, Macquarie University, Australia\\\texttt{gavin.brennen@mq.edu.au}}

\date{\today}

\begin{document}

\maketitle

\begin{abstract}
Differential privacy is a mathematical notion of data privacy that has fast become the de facto standard in privacy-preserving data analysis. Recently a lot of work has focused on differential privacy in the quantum setting. Continuing on this line of study, we investigate how to answer counting queries on a quantum encoded dataset with differential privacy. An example of a counting query is ``How many people in the dataset are over the age of 25 and with a university education?'' Counting queries form the most basic but nonetheless rich set of statistics extractable from a dataset. We show that answering these queries on a quantum encoded dataset reduces to measuring the amplitude of one of two orthogonal states. We then analyze the differential privacy properties of two algorithms from literature to measure amplitude: one which performs repeated measurements in the computational basis, and the other which utilizes the classic amplitude estimation algorithm. For the first technique, we prove privacy results for the case of counting queries that improve on previously known results on general queries, and show that the mechanism in fact \emph{amplifies} privacy due to inherent randomness. For the second method, we derive a tight bound on maximum possible change in the amplitude if we add or remove a single item in the dataset, a quantity called global sensitivity which is central in making an algorithm differentially private. We then show a differentially private version of the amplitude estimation algorithm for counting queries. We also discuss how these methods can be outsourced to a quantum server to blindly compute counting queries with differential privacy. 
\end{abstract}

\section{Introduction}
We investigate the problem of answering statistical queries with privacy on a classical dataset stored as a quantum encoded dataset on a quantum computer. Many uses cases require statistical analysis of sensitive datasets, such as the analysis of census data by demographics. The goal is to allow this analysis without compromising the privacy of individuals contributing their data. In the classical setting, the rigorous framework of differential privacy~\cite{dwork2006calibrating} gives a solution to this problem. Informally, differential privacy guarantees that the outcome of a differentially private algorithm, which is given the target dataset as input, is almost indistinguishable if any single person's data is added or removed. With a differentially private algorithm, the database owner can answer queries such as the number of people residing in a region with a particular demographic, without fear that the answers to these queries might result in a privacy breach.

We are interested in the scenario where a database owner $\client$, which we call the client, uploads its dataset to a quantum computing server $\server$ which in turn answers queries on the dataset from a third party called the analyst $\analyst$. Just as in the classical setting, the mechanism should answer queries via differential privacy. We are particularly interested in mechanisms that answer a certain class of queries called \emph{counting queries}. A subset of such queries include \emph{predicate queries}, which arise naturally in a variety of real-world datasets, such as census data, and have been analyzed extensively in classical differential privacy literature~\cite{mckenna2021hdmm, smith2022making}. An example of such a query is ``How many people in the dataset are students and within the age bracket of 25-34?'' With access to the classical dataset and classical differentially private mechanisms, such as the Laplace mechanism~\cite{dwork2006calibrating}, the server $\server$ can easily answer such queries. In more detail, the server can compute the original answer, and then add \emph{noise} of appropriate scale by sampling a random variable from the Laplace distribution. However, we are interested in the scenario where the dataset is encoded as quantum information. The main motivation behind this scenario is that recent work on quantum algorithms for differential privacy show that privacy is amplified if a mechanism takes the quantum encoding of the dataset as input, instead of the original, classical data~\cite{angrisani2022differential}. Here, privacy amplification means that the differentially private mechanism is in fact more private than what the rudimentary analysis suggests, owing to the inherent randomization of the process. In the classical setting, this is, for instance, true of subsampling, in which the differentially private algorithm is applied on random subsamples rather than the entire dataset~\cite{balle2018privacy}. Thus, our goal is a more accurate privacy loss analysis of algorithms that answer predicate (or counting) queries on a quantum encoded dataset.

Another motivating factor for this use case is the potential to efficiently outsource computation to the server $\server$ without the server learning anything about the dataset, albeit with some interaction with the client $\client$. This can be done by encrypting the quantum encoded dataset using the quantum one-time pad (QOTP) (see~\cite{vidick2023introduction}, for example), and the server $\server$ can then answer differentially private queries over the encrypted dataset using the homomorphic properties of the QOTP. In the classical setting, this can also be done using a fully homomorphic encryption (FHE) scheme. The client $\client$ uploads its encrypted dataset to the server $\server$, who then allows the execution of differentially private queries over the encrypted dataset using the homomorphic properties of the encryption scheme. The final result can then be decrypted by the client and sent to the analyst $\analyst$.\footnote{Alternatively, we could use secure multiparty computation (SMC) techniques, with a trade-off of more communication between $\client$ and $\server$.}  
% by the $\client$ encrypting the dataset $D$ using a fully homomorphic encryption (FHE) scheme $\mathcal{E}$ as $\mathcal{E}(D)$, and sending the encrypted dataset to $\server$. A query $q$ from the $\analyst$ can then be computed on the encrypted dataset $\mathcal{E}(D)$ as $q(\mathcal{E}(D)) = \mathcal{E}(q(D))$ using the homomorphic properties of $\mathcal{E}$ and then either $\server$ sends the encrypted answer $\mathcal{E}(q(D))$ to $\client$, who can decrypt it and then add additive differentially private noise before sending the final answer to $\analyst$, or $\server$ adds differentially private noise to $\mathcal{E}(q(D))$ using the homomorphic properties of $\mathcal{E}$ again, before sending it to $\client$ who simply decrypts the answer and sends the result to $\analyst$. As long as $\server$ is honest-but-curious and there is no collusion between $\server$ and $\analyst$ the scheme achieves the desired confidentiality, with respect to $\server$, and privacy, with respect to both $\analyst$ and $\server$, goals.\footnote{Alternatively, we could use secure multiparty computation (SMC) techniques, with a trade-off that of more communication between $\client$ and $\server$.}  
However, the issue with the classical approach is that FHE operations on a large encrypted dataset are computationally expensive in practice. The QOTP scheme is potentially more efficient since encryption can be done via simple Pauli gates, and most of the homomorphic computation using Pauli or Clifford gates can be carried out over the encrypted data with some change to the encryption keys requiring client interaction. Moreover, non-Clifford gates can also be performed, although with extra machinery and a quantum client (see Section~\ref{sec:homomorphic} as well as~\cite{vidick2023introduction}).

Motivated by these observations, this paper explores this setting in detail and makes the following contributions.

\begin{itemize}
    \item We show that since a counting query can be formulated as a Boolean function, its effect on the quantum encoded state is to decompose it into two orthogonal components, one of which exactly corresponds to the rows of the dataset that evaluate to true under the query (modeled as a predicate). In particular, the probability of measuring this state exactly equals the answer to the counting query. We then discuss two methods to extract the query answer. The first estimates the (square of the) amplitude via repeated measurements as described in~\cite{angrisani2022differential}, and the second uses the amplitude estimation method from~\cite{Brassard_2002}. Both methods have not been used in the context of answering differentially private counting queries. The first method is described in~\cite{angrisani2022differential} for a generic measurement operator with binary outcomes, whereas the second method's main use is in searching or counting solutions to a Boolean function in a database. 

    \item For the first method, i.e., repeated direct measurements, we first show that using the algorithm and privacy amplification results shown in~\cite{angrisani2022differential}, we in fact end up adding more noise than differential privacy in the classical setting. We instead perform a fresh analysis of the differential privacy properties of this algorithm, and show that the algorithm allows a spectrum of differential privacy guarantees for counting queries, including one parameter setting where we obtain differential privacy without adding explicit noise! Our results show that less noise needs to be added to the result of a counting query to make it differentially private than previously shown for a general case~\cite{angrisani2022differential}. A distinguishing characteristic of a counting query over a generic query is that its global sensitivity is 1. Roughly, global sensitivity of a function is the maximum change in its value if we add or remove a single data point. 

    \item For the second method, we add noise of appropriate scale to the phase of the eigenvalue of the main operator used in the amplitude estimation method from~\cite{Brassard_2002} to make the algorithm differentially private (see Section~\ref{subsec:amp-est}). To do so, our main contribution is to derive the global sensitivity of the phase of this eigenvalue, through the global sensitivity of a counting query. We then devise a differentially private version of the amplitude estimation algorithm for counting queries by defining a unitary operator that adds noise of appropriate scale to the phase of the eigenvalue.

    \item We briefly discuss how inherent quantum noise, modeled as a depolarizing channel after every unitary, can be factored into our results to give an overall differential privacy guarantee: from state preparation to query evaluation. We further discuss how the mechanisms, including the inputs and outputs, can be computed blindly by the server using the quantum one-time pad along with its homomorphic properties.
\end{itemize}

\section{Background}
We give a brief background on differential privacy and how datasets can be encoded as quantum information, together with known results connecting the two.  

\subsection{Differential Privacy}

Let $\mathcal{X}$ be a data domain. We assume a dataset to be a tuple of $n$  elements $D = (x_1, \ldots, x_n) \in \mathcal{X}^n$, where each $x_i \in \mathcal{X}$. Two datasets $D, D'$ are neighbours, denoted $D \sim D'$ if they differ in exactly one element. %Let $\epsilon \ge 0$, and $\delta \in [0, 1]$. 

\begin{definition}[Differential Privacy~\cite{dwork2006calibrating}]
\label{def:dp}
A randomized algorithm $\mathcal{M}: \mathcal{X}^n \rightarrow \mathcal{Y}$ is $(\epsilon, \delta)$-differentially private (DP) if for all pairs of neighbouring datasets $D, D'$ and for all subsets $S \subseteq \mathcal{Y}$, we have:
\[
\Pr[\mathcal{M}(D) \in S] \leq e^\epsilon \Pr[\mathcal{M}(D') \in S] + \delta,
\]
where $\epsilon > 0$ and $\delta$ is a negligible function of $n$. If $\delta = 0$, we say that the mechanism is $\epsilon$-differentially private. 
\end{definition}
The following are known results.
\begin{theorem}[Post-processing~\cite{dwork2006calibrating}]\label{prop:post-processing}
    If a mechanism $\mathcal{M}: \mathcal{X}^n \rightarrow \mathcal{Y}$ is $(\epsilon,\delta)$-DP, then for any function $f$ with domain $\mathcal{Y}$, the function $f \circ \mathcal{M}$ remains $(\epsilon,\delta)$-DP.
\end{theorem}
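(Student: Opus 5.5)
The plan is to reduce the claim to the definition of differential privacy (Definition~\ref{def:dp}) by pulling events on the output of $f \circ \mathcal{M}$ back to events on the output of $\mathcal{M}$. I treat two cases: deterministic $f$ first, then randomized $f$, where the randomness of $f$ is independent of the dataset.

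\textbf{Deterministic $f$.} Let $f:\mathcal{Y}\to\mathcal{Z}$ and fix neighbouring datasets $D\sim D'$ and a set $T\subseteq\mathcal{Z}$. Set $S = f^{-1}(T) = \{y\in\mathcal{Y} : f(y)\in T\}$. The event $f(\mathcal{M}(D))\in T$ is exactly the event $\mathcal{M}(D)\in S$. Applying the $(\epsilon,\delta)$-DP guarantee of $\mathcal{M}$ to $S$ gives
\[
\Pr[f(\mathcal{M}(D))\in T] = \Pr[\mathcal{M}(D)\in S] \le e^\epsilon \Pr[\mathcal{M}(D')\in S] + \delta = e^\epsilon \Pr[f(\mathcal{M}(D'))\in T] + \delta .
\]
Since $D, D'$ and $T$ were arbitrary, $f\circ\mathcal{M}$ is $(\epsilon,\delta)$-DP.

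\textbf{Randomized $f$.} Write $f(y) = g(y,r)$, where $g$ is deterministic and $r$ is drawn from a distribution $\mu$ independent of $\mathcal{M}$ and of the dataset. For each fixed $r$, the map $g(\cdot,r)$ is deterministic, so the previous case gives
\[
\Pr[g(\mathcal{M}(D),r)\in T] \le e^\epsilon \Pr[g(\mathcal{M}(D'),r)\in T] + \delta .
\]
Integrating both sides over $r\sim\mu$ preserves the inequality. The $\delta$ term survives unchanged because $\mu$ is a probability measure, so $\int \delta \, d\mu(r) = \delta$. This yields the claim for randomized $f$.

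\textbf{Main obstacle.} The only delicate point is measurability. One must ensure that $f^{-1}(T)$ is a legitimate event, and that the map $r \mapsto \Pr[g(\mathcal{M}(D),r)\in T]$ is measurable so that Fubini's theorem applies. I would handle this by assuming, as is implicit in Definition~\ref{def:dp}, that $f$ (respectively $g$) is measurable with respect to the relevant $\sigma$-algebras. Under that assumption the argument is complete.
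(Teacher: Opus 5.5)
Your proof is correct. The paper gives no proof of its own and only cites the result; your argument is the standard one: pull $T$ back to $f^{-1}(T)$ for deterministic $f$, then average over the independent randomness of a randomized $f$, where $\delta$ survives because $\mu$ is a probability measure. Your measurability caveat is already implicit in Definition~\ref{def:dp}, which quantifies over all subsets $S \subseteq \mathcal{Y}$.
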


\begin{theorem}[Sequential composition~\cite{dwork-dp-book}]\label{prop:composition}
    If $\mathcal{M}_1,\ldots,\mathcal{M}_t$ are $(\epsilon,\delta)$-DP mechanisms, then the sequence of algorithms $\mathcal{M}' = \left(\mathcal{M}_1,\ldots,\mathcal{M}_t\right)$ is $(t\epsilon, t\delta)$-DP.
\end{theorem}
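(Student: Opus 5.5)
We argue by induction on $t$, peeling off one mechanism at a time. We read the composed mechanism $\mathcal{M}'(D) = (\mathcal{M}_1(D),\ldots,\mathcal{M}_t(D))$ as running each $\mathcal{M}_i$ on $D$ with independent internal randomness. Its output lies in $\mathcal{Y}_1\times\cdots\times\mathcal{Y}_t$. The adaptive case, where $\mathcal{M}_i$ may also see earlier outputs, is handled the same way by conditioning on the prefix. The base case $t=1$ is immediate, so the whole argument reduces to composing two mechanisms. An $(\epsilon_1,\delta_1)$-DP mechanism and an $(\epsilon_2,\delta_2)$-DP mechanism should compose to an $(\epsilon_1+\epsilon_2,\delta_1+\delta_2)$-DP mechanism. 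Applying this $t-1$ times gives $(t\epsilon,t\delta)$.

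For the two-mechanism step, fix neighbours $D\sim D'$ and a measurable $S\subseteq \mathcal{Y}_1\times\mathcal{Y}_2$. For $y_1 \in \mathcal{Y}_1$, write the slice $S_{y_1}=\{y_2 : (y_1,y_2)\in S\}$. By independence,
\[
\Pr[\mathcal{M}'(D)\in S]=\mathbb{E}_{y_1\sim \mathcal{M}_1(D)}\bigl[\Pr[\mathcal{M}_2(D)\in S_{y_1}]\bigr].
\]
We bound the inner probability using Definition~\ref{def:dp} for $\mathcal{M}_2$. Since a probability is at most $1$, this gives $\Pr[\mathcal{M}_2(D)\in S_{y_1}]\le \min\bigl(1, e^{\epsilon_2}\Pr[\mathcal{M}_2(D')\in S_{y_1}]\bigr)+\delta_2$. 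Taking the expectation over $y_1$ contributes the additive $\delta_2$.

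What remains is $\mathbb{E}_{y_1\sim\mathcal{M}_1(D)}[g(y_1)]$, where $g(y_1)=\min\bigl(1,e^{\epsilon_2}\Pr[\mathcal{M}_2(D')\in S_{y_1}]\bigr)$ takes values in $[0,1]$. To move the outer expectation from $D$ to $D'$ we use the layer-cake identity $\mathbb{E}[g]=\int_0^1\Pr[g(y_1)>s]\,ds$. For each level $s$ we apply the DP guarantee of $\mathcal{M}_1$ to the set $\{y_1 : g(y_1)>s\}$. Integrating over $s\in[0,1]$ gives
\[
\mathbb{E}_{\mathcal{M}_1(D)}[g]\le e^{\epsilon_1}\,\mathbb{E}_{\mathcal{M}_1(D')}[g]+\delta_1.
\]
Finally, drop the $\min$ to bound $e^{\epsilon_1}g\le e^{\epsilon_1+\epsilon_2}\Pr[\mathcal{M}_2(D')\in S_{y_1}]$. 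This yields $\Pr[\mathcal{M}'(D)\in S]\le e^{\epsilon_1+\epsilon_2}\Pr[\mathcal{M}'(D')\in S]+\delta_1+\delta_2$.

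The main obstacle is exactly this step of transferring the outer expectation from $D$ to $D'$. Definition~\ref{def:dp} only speaks about indicator functions of sets, and it is the truncation of $g$ at $1$ that lets the additive $\delta_1$ pass through unscaled. Without the $\min$, the $\delta_1$ term would be multiplied by the size of $g$, which can be as large as $e^{\epsilon_2}$. For adaptive composition, $\mathcal{M}_2$ depends on $y_1$, but the DP bound for $\mathcal{M}_2(\cdot\,;y_1)$ holds for each fixed $y_1$, so the same argument goes through unchanged.
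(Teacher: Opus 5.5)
Your argument is correct. The paper gives no proof of this statement; it imports it as background from Dwork--Roth~\cite{dwork-dp-book}. The relevant comparison is therefore with the standard proof in that source (its appendix on composition for $(\epsilon,\delta)$-DP), and your proposal has the same structure. It too conditions on the first output, bounds the inner probability by $\min\bigl(1,e^{\epsilon_2}\Pr[\mathcal{M}_2(D')\in S_{y_1}]\bigr)+\delta_2$, and handles adaptivity by fixing $y_1$.

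The one genuine difference is the step you identify as the crux: moving the outer expectation from $D$ to $D'$. The textbook route dominates the law of $\mathcal{M}_1(D)$ by $e^{\epsilon_1}$ times the law of $\mathcal{M}_1(D')$, plus a nonnegative measure of total mass at most $\delta_1$ (the positive part of the signed difference). It then integrates the $[0,1]$-valued $g$ against this bound. Your layer-cake identity reaches the same inequality, $\mathbb{E}_{\mathcal{M}_1(D)}[g]\le e^{\epsilon_1}\mathbb{E}_{\mathcal{M}_1(D')}[g]+\delta_1$, by invoking Definition~\ref{def:dp} only on the level sets $\{g>s\}$. That is slightly more elementary, since it never leaves the set-based definition and needs no measure decomposition.

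Two routine points are left implicit. First, measurability of $y_1\mapsto g(y_1)$, which follows from Tonelli applied to sections of product sets, or from a measurable kernel in the adaptive case. Second, in the induction you need to treat the prefix $(\mathcal{M}_1,\ldots,\mathcal{M}_k)$ as a single mechanism whose randomness is independent of that of $\mathcal{M}_{k+1}$.
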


A \emph{query} $q$ is defined as a function from $q: x \rightarrow \mathbb{R}$, where $x$ is any row of the dataset. Overloading notation, the query $q$ on the dataset $D$ is defined as $q(D) \coloneqq \frac{1}{n}\sum_{i = 1}^n q(x_i)$.\footnote{We divide by $n$ so that all query answers are within the interval $[0, 1]$. One can obtain the corresponding unnormalized value by simply multiplying the result by $n$.}

\begin{definition}[Global Sensitivity]
The global sensitivity of a query $q$, denoted $\Delta q$, is defined as 
\[
\Delta q = \max_{D \sim D'} |q(D) - q(D')| 
\]
\end{definition}
Thus, global sensitivity captures the maximum possible change in the answer to the query over all possible neighboring datasets. 

\begin{definition}[Laplace mechanism~\cite{dwork2006calibrating}]
\label{def:laplace}
The zero-mean Laplace distribution has the probability density function $\text{Lap}(x \mid b) = \frac{1}{2b} e^{-\frac{|x|}{b}}$, where $b$, a non-negative real number, is a scale parameter. Given a dataset $D$ and query $q$, the Laplace mechanism is defined as 
\[
\mathcal{M}_{\text{Lap}}(q, D) = q(D) + \eta 
\]
where $\eta$ is a Laplace random variable of scale $\Delta q / \epsilon$.
\end{definition}

\begin{theorem}
The Laplace mechanism is $\epsilon$-differentially private~\cite{dwork2006calibrating}. 
\end{theorem}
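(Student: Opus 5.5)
The plan is to compare the output densities of $\mathcal{M}_{\text{Lap}}(q,D)$ and $\mathcal{M}_{\text{Lap}}(q,D')$ pointwise and then integrate over an arbitrary measurable set $S\subseteq\mathbb{R}$. Fix neighbouring datasets $D\sim D'$ and write $b=\Delta q/\epsilon$. Since $q(D)$ is a deterministic real number and $\eta\sim\text{Lap}(\cdot\mid b)$, the output $\mathcal{M}_{\text{Lap}}(q,D)$ has density
\[
p_D(y)=\frac{1}{2b}\exp\!\left(-\frac{|y-q(D)|}{b}\right),
\]
and $p_{D'}$ is the same with $q(D)$ replaced by $q(D')$. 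We may assume $\Delta q>0$: if $\Delta q=0$, then $q(D)=q(D')$ for all neighbours, the two output distributions coincide, and the inequality of Definition~\ref{def:dp} holds trivially. (Otherwise the scale $b$ would be zero, which is degenerate.)

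The key step is the pointwise ratio bound. For every $y\in\mathbb{R}$,
\[
\frac{p_D(y)}{p_{D'}(y)}=\exp\!\left(\frac{|y-q(D')|-|y-q(D)|}{b}\right)\le\exp\!\left(\frac{|q(D)-q(D')|}{b}\right)\le\exp\!\left(\frac{\Delta q}{b}\right)=e^{\epsilon}.
\]
The first inequality is the reverse triangle inequality. The second uses the definition of global sensitivity, since $D\sim D'$.

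Integrating then gives, for any measurable $S$,
\[
\Pr[\mathcal{M}_{\text{Lap}}(q,D)\in S]=\int_S p_D(y)\,dy\le e^{\epsilon}\int_S p_{D'}(y)\,dy=e^{\epsilon}\Pr[\mathcal{M}_{\text{Lap}}(q,D')\in S].
\]
This is Definition~\ref{def:dp} with $\delta=0$.

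There is no serious obstacle here. The only points needing care are the measure-theoretic reading of ``all subsets $S$'' (we take measurable sets, and the density argument covers them) and the degenerate case $\Delta q=0$ handled above. The argument is symmetric in $D$ and $D'$, so both directions of the inequality hold.
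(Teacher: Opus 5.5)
Your proof is correct. The paper gives no proof of this statement; it simply cites \cite{dwork2006calibrating}, and your argument is the standard one from that source. It bounds the density ratio pointwise by $e^{\epsilon}$ using the reverse triangle inequality and the definition of $\Delta q$, then integrates over measurable $S$. Your separate handling of the degenerate case $\Delta q = 0$ is appropriate.
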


We also recall the definition of quantum differential privacy (QDP):
\begin{definition}[Quantum Differential Privacy~\cite{zhou2017differential, angrisani2022differential}]
A quantum operation $\mathcal{E}$ is $(\tau, \varepsilon, \delta)$-quantum differentially private (QDP) if for every positive operator-valued measure (POVM) $M$, for all subsets $S$ of possible outcomes, and for all states $\rho$ and $\sigma$ such that $\mathcal{T}(\rho, \sigma) \leq \tau$, 
\[
\Pr\big[M(\mathcal{E}(\rho)) \in  S\big] 
\leq e^{\varepsilon} \Pr\big[M(\mathcal{E}(\sigma)) \in  S\big] + \delta.
\]
\end{definition}

See the next section for the definition of the trace distance $\mathcal{T}$, and quantum states $\rho$ and $\sigma$.

\subsection{Quantum Encoding of Datasets}
We call each element $x_i$ of the dataset $D = (x_1, \ldots, x_n)$ a \emph{row} of the dataset, and further assume that each $x_i \in \{0, 1\}^m$ for a fixed positive integer $m$. Typically, a dataset is formatted as a table with columns representing attributes and rows representing data from individuals. Based on the range of values taken by each attribute, discretizing if it is continuous, we can represent each value as a binary string of a fixed length. Thus, without loss of generality, we can represent each row of the dataset as a string in $\{0, 1\}^m$, where $m$ is determined by the attributes in the dataset. We can define a \emph{feature map} $\phi$ which encodes the dataset $D$ into its quantum encoding. The most common method is \emph{basis encoding} which encodes $D$ as 
\begin{equation}
\label{eq:state-prepare}
    \ket{\phi_D} = \frac{1}{\sqrt{n}}\sum_{i = 1}^n \ket{x_i}
\end{equation}
Here $\ket{x_i}$ represents an $m$-dimensional state. In case there are duplicates in the dataset, which would result in an unnormalized state from Eq.~\ref{eq:state-prepare}, we can always prefix the data rows with the binary representation of indices $0$ to $n-1$, which will only inflate the length $m$ of each string by an additive factor of $\log_2 n$. From here onwards, we assume this to be the case. In the density matrix formulation, we write this map as 
\[
\rho_D  = \proj{\phi_D} 
\]
As shown in~\cite{schuld2021supervised}, the quantum encoding defines a \emph{quantum kernel} defined over two datasets $D$ and $D'$ as
\[
\kappa_\phi(D,D') := \mathsf{tr}(\rho_D\rho_{D'}) =  \braket{\phi_D}{\phi_{D'}} \mathsf{tr}( \ket{\phi_{D'}}\bra{\phi_D}) = |\braket{\phi_D}{\phi_{D'}}|^2,
\]
where $\mathsf{tr}()$ is the matrix trace function. 
For differential privacy,  the \emph{quantum minimum adjacent kernel} is defined in \cite{angrisani2022differential} as 
\[
\hat\kappa_\phi := \min_{D \sim D'} \kappa_\phi(D, D') 
\]
where recall that $D \sim D'$ denotes neighboring datasets. For the basis encoding, we get 
\begin{align*}
    \kappa_\phi(D, D') &= |\braket{\phi_D}{\phi_{D'}}|^2 \\
    &= \left| \frac{1}{n} \sum_i \sum_j \braket{x_i}{x_j} \right|^2 \\
    &= \left| \frac{1}{n} \sum_i \delta_{x_i, x_j} \right|^2 \\
    &= \left(  \frac{\sum_i \delta_{x_i, x_j}}{n}\right)^2
\end{align*}
From this it follows that 
$\hat\kappa_\phi$ for the basis encoding is
\[
\hat\kappa_\phi = \min_{D \sim D'} \kappa(D, D')  = \left(\frac{n-1}{n}\right)^2 = \left(1 - \frac{1}{n}\right)^2
\]
as shown in Table 1 in~\cite{angrisani2022differential}.\footnote{In~\cite{angrisani2022differential} this quantity is erroneously shown as $1 - 1/n$.} For any two states $\rho$ and $\sigma$, the trace ditance between them is defined as
\[
\mathcal{T}(\rho, \sigma) = \frac{1}{2}\mathsf{tr}(|\rho - \sigma|)
\]
where for a matrix $A$, we define $|A|:=\sqrt{A^\dagger A}$, i.e., the positive square root of $A^\dagger A$. 

\subsection{Differential Privacy Amplification}
\label{subsec:dp-amplify}
Several differential privacy amplification results are proved in~\cite{angrisani2022differential} related to the minimum adjacent kernel. We recall two relevant results here.

\begin{theorem}[\cite{angrisani2022differential}]
Let $D$ be a dataset and let $\rho_D  = \proj{\phi_D}$ be its quantum encoding. Let $\mathcal{M}$ be a mechanism that takes $\proj{\phi_D}$ as input, and not the original dataset $D$. Then $\mathcal{M} (\rho_D)$ is $(0, \sqrt{1 - \hat\kappa_\phi})$-differentially private.  
\end{theorem}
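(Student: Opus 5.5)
The natural route is to show that the trace distance between the encodings of any two neighbouring datasets is at most $\sqrt{1-\hat\kappa_\phi}$. We then use the operational meaning of trace distance: it bounds the total variation distance between the outcome distributions of any measurement, and that total variation bound is exactly $(0,\delta)$-DP with $\delta$ equal to it.

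\textbf{Step 1: trace distance of pure states.} Fix neighbouring $D\sim D'$ with encodings $\rho_D=\ket{\phi_D}\bra{\phi_D}$ and $\rho_{D'}=\ket{\phi_{D'}}\bra{\phi_{D'}}$. For pure states we have the standard identity
\[
\mathcal{T}(\rho_D,\rho_{D'})=\sqrt{1-|\braket{\phi_D}{\phi_{D'}}|^2}=\sqrt{1-\kappa_\phi(D,D')}.
\]
To prove it, note that $\rho_D-\rho_{D'}$ is traceless and has rank at most two, supported on $\mathrm{span}\{\ket{\phi_D},\ket{\phi_{D'}}\}$. Its two eigenvalues are therefore $\pm\lambda$. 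From $\mathsf{tr}((\rho_D-\rho_{D'})^2)=2-2|\braket{\phi_D}{\phi_{D'}}|^2=2\lambda^2$ we get $\lambda=\sqrt{1-\kappa_\phi(D,D')}$, and $\mathcal{T}=\frac12(2\lambda)=\lambda$. Since $\kappa_\phi(D,D')\ge\hat\kappa_\phi$ by definition of the minimum adjacent kernel, this gives $\mathcal{T}(\rho_D,\rho_{D'})\le\sqrt{1-\hat\kappa_\phi}$ for every neighbouring pair.

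\textbf{Step 2: contractivity.} Model $\mathcal{M}$ as a quantum channel $\mathcal{E}$ (CPTP, possibly with ancillas and internal randomness) followed by a measurement, i.e.\ a POVM $\{M_y\}$. Its output distribution on input $D$ is $\Pr[\mathcal{M}(D)=y]=\mathsf{tr}(M_y\,\mathcal{E}(\rho_D))$. Trace distance does not increase under CPTP maps, so $\mathcal{T}(\mathcal{E}(\rho_D),\mathcal{E}(\rho_{D'}))\le\mathcal{T}(\rho_D,\rho_{D'})$. Classical post-processing is likewise covered, either by absorbing it into the POVM or by Theorem~\ref{prop:post-processing}.

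\textbf{Step 3: measurement bound.} For any subset $S$ of outcomes, set $\Pi_S=\sum_{y\in S}M_y$, which satisfies $0\le\Pi_S\le I$. For any density matrices $\rho,\sigma$ we have $\mathsf{tr}(\Pi_S(\rho-\sigma))\le\mathcal{T}(\rho,\sigma)$. This follows by splitting $\rho-\sigma=P-Q$ into positive and negative parts with $\mathsf{tr}P=\mathsf{tr}Q=\mathcal{T}(\rho,\sigma)$, and bounding $\mathsf{tr}(\Pi_S P)\le\mathsf{tr}P$. Combining this with Steps 1 and 2 gives
\[
\Pr[\mathcal{M}(D)\in S]\le\Pr[\mathcal{M}(D')\in S]+\sqrt{1-\hat\kappa_\phi},
\]
which is Definition~\ref{def:dp} with $\epsilon=0$ and $\delta=\sqrt{1-\hat\kappa_\phi}$.

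The hardest part is Step 1, specifically getting the pure-state trace distance formula cleanly via the rank-two eigenvalue argument. The rest is standard. A formal point remains: Definition~\ref{def:dp} requires $\epsilon>0$ and $\delta$ negligible. I would state that the result is meant in the relaxed $(0,\delta)$ sense, and note that for the basis encoding $\delta=\sqrt{1-(1-1/n)^2}\approx\sqrt{2/n}$, which is not negligible.
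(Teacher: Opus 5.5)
Your proposal is correct and follows the same route as the paper, which justifies the result only by the observation $\mathcal{T}(\rho_D,\rho_{D'})\le\sqrt{1-|\langle\phi_D|\phi_{D'}\rangle|^2}=\sqrt{1-\kappa_\phi(D,D')}\le\sqrt{1-\hat\kappa_\phi}$ for neighbouring $D\sim D'$. Your Steps 1--3 make explicit what that sketch leaves implicit: the pure-state trace-distance formula, contractivity under the mechanism's channel, and the bound $\mathsf{tr}(\Pi_S(\rho-\sigma))\le\mathcal{T}(\rho,\sigma)$ that turns trace distance into an additive $\delta$. Your remark that a $(0,\delta)$ guarantee with non-negligible $\delta$ falls outside the letter of Definition~\ref{def:dp} is a fair caveat.
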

The above result is based on the observation that if $\rho_D$ and $\rho_{D'}$ are quantum encodings of two neighboring datasets $D$ and $D'$ then
\[
\mathcal{T}(\rho_D, \rho_{D'}) \leq \sqrt{1 - |\braket{\phi_D}{\phi_{D'}}|^2} = \sqrt{1 - \kappa(D, D')} \leq \sqrt{1 - \hat\kappa_\phi},
\]
where the last inequality follows from the definition of the minimum adjacent kernel. The quantity $\sqrt{1 - \hat\kappa_\phi}$ for basis encoding is
\[
\sqrt{1 - \left(1 - \frac{1}{n}\right)^2}= \frac{\sqrt{2n - 1}}{n}
\]

\begin{theorem}
\label{theorem:povm-lap}
Let $D$ be a dataset and let $\rho_D  = \proj{\phi_D}$ be its quantum encoding. Let $M$ be a positive operator-valued measure (POVM) with outcomes in $\{0, 1\}$. Let $\mathcal{M}$ be the following mechanism: (1) for $1 \leq i \leq t$, encode $D$ as $\rho_D$, (2) apply $\mathcal{M}$ to $\rho_D$ and store the outcome as $y_i$, (3) output $\frac{1}{t}\sum_{i=1}^t y_i + \eta$, where $\eta$ is a Laplace random variable of scale $\frac{1}{\epsilon}(\tau + \sqrt{1 - \hat\kappa_\phi} )$, for any $\tau \geq 0$. Then the mechanism $\mathcal{M}$ is $\epsilon$-differentially private with exponentially high probability in $\tau$ and $t$. 
\end{theorem}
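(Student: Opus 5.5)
The plan is to reduce the statement to a classical Laplace-mechanism argument applied to the empirical average $\bar y = \frac{1}{t}\sum_{i=1}^t y_i$. The first step is to show that this average concentrates around its mean. Write $p_D = \mathsf{tr}(M_1 \rho_D)$ for the probability that the binary POVM returns outcome $1$ on input $\rho_D$. Each $y_i$ is then an independent Bernoulli$(p_D)$ variable, because $\rho_D$ is freshly prepared in each round. Hoeffding's inequality then gives
\[
\Pr\big[|\bar y - p_D| > \tau/2\big] \le 2e^{-t\tau^2/2},
\]
and the same bound holds for a neighbouring dataset $D'$ with $p_{D'}$ in place of $p_D$.

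The second step controls the distance between the two means. Since $0 \le M_1 \le I$, standard trace-distance facts give
\[
|p_D - p_{D'}| = |\mathsf{tr}(M_1(\rho_D - \rho_{D'}))| \le \mathcal{T}(\rho_D,\rho_{D'}).
\]
By the bound recalled after the preceding theorem, this is at most $\sqrt{1-\hat\kappa_\phi}$.

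The third step combines the two. Outside a failure event of probability at most $4e^{-t\tau^2/2}$, the triangle inequality gives
\[
|\bar y_D - \bar y_{D'}| \le \tau + \sqrt{1-\hat\kappa_\phi}.
\]
On this event the Laplace mechanism of Definition~\ref{def:laplace} with scale $(\tau+\sqrt{1-\hat\kappa_\phi})/\epsilon$ is $\epsilon$-DP, by the usual ratio-of-densities argument: $e^{-|x-a|/b}/e^{-|x-a'|/b} \le e^{|a-a'|/b}$. This yields $\epsilon$-DP except with probability exponentially small in $t\tau^2$, which is what ``with exponentially high probability'' should mean. Equivalently, it gives $(\epsilon,\delta)$-DP with $\delta = 4e^{-t\tau^2/2}$. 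To obtain that form, split the probability of any set $S$ according to whether the concentration event holds and bound the failure part by $\delta$.

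The main obstacle is that $\bar y$ is itself random, so ``sensitivity'' is not a deterministic quantity here. The way around it is to couple the two executions. On the good event, condition on realized averages $a$ and $a'$ with $|a-a'| \le \tau+\sqrt{1-\hat\kappa_\phi}$, apply the Laplace ratio bound, and then average. For this averaging to be clean, the good event must be defined via closeness to the respective means $p_D$ and $p_{D'}$, which are deterministic. The bad event then contributes only additively, through $\delta$.
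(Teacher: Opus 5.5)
Your proposal is correct and follows essentially the same route as the paper's sketch: bound the difference of the outcome-$1$ probabilities by the trace distance and hence by $\sqrt{1-\hat\kappa_\phi}$, use Chernoff--Hoeffding to put both sample means within $\tau$ of each other with overwhelming probability, and then apply the Laplace mechanism at scale $(\tau+\sqrt{1-\hat\kappa_\phi})/\epsilon$. Your independent-coupling step is sound and goes slightly beyond what the paper states: it turns ``with exponentially high probability'' into an explicit $(\epsilon, 4e^{-t\tau^2/2})$-DP guarantee, whereas the paper leaves that phrase informal.
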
 

This result is based on the fact that if $\rho_D$ and $\rho_{D'}$ are quantum encodings of two neighboring datasets $D$ and $D'$ then 
\begin{equation}
\label{eq:mechanism-exp}
    |\mathbb{E}[\mathcal{M}(\rho_D) - \mathcal{M}(\rho_{D'})]| = |\Pr[\mathcal{M}(\rho_D) = 1] - \Pr[\mathcal{M}(\rho_{D'}) = 1] | \leq \sqrt{1 - \hat\kappa_\phi}
\end{equation}
Using the Chernoff-Hoeffding bound with parameter $\tau$ we can ensure that the expected values on the two datasets on the left hand side are exponentially close to their sample means, and hence the difference is at most $\tau + \sqrt{1 - \hat\kappa_\phi}$ with overwhelming probability. For basis encoding the Laplace mechanism uses the scale:
\begin{equation}
\label{eq:basis-encoding-lap-scale}
\frac{\tau + \sqrt{1 - \hat\kappa_\phi}}{\epsilon} = \frac{\tau + \sqrt{2n - 1}/n}{\epsilon}  
\end{equation}

\subsection{Preparing the Initial Encoded State}
Since the mechanisms will use quantum (basis) encoding of a dataset, we briefly discuss the complexity of preparing such a quantum state. In general, the dimension of the dataset, i.e., $m$, would be rather large, e.g., 100. However, the number of rows $n$ of the dataset usually satisfies $n \ll 2^m$. Hence, this corresponds to sparse initial state preparation, which can be done using $\mathcal{O}(mn)$ gates~\cite{de2022double-sparse, de2020circuit-pqm}. 
%~\cite{gleinig2021sparse}. The algorithm in~\cite{gleinig2021sparse} is much slower than this. The $\mathcal{O}(mn)$ construction result is in fact in~\cite{de2022double-sparse}. 
In particular the probabilistic quantum
memory (PQM) algorithm as described in~\cite{de2020circuit-pqm} exactly prepares the state from Eq.~\ref{eq:state-prepare}. 

\section{Predicate Queries and Quantum Circuits to Evaluate Them}
In this section, we recall the definition of predicate queries, and then discuss how they can be implemented as quantum circuits. Predicate queries arise naturally in a variety of real-world datasets, such as census data, and have been analyzed extensively in classical differential privacy literature~\cite{mckenna2021hdmm, smith2022making}. 

\subsection{Predicate Queries}
\label{subsec:pred-queries}
We assume that the dataset $D$ is vertically divided into \emph{attributes}. Examples of an attribute are profession and marital status. Formally, an \emph{attribute} $A$ is a finite set whose elements are called attribute \emph{values}. For instance, single and married are attribute values of the marital status attribute. Given the dataset $D = (x_1, x_2, \ldots, x_n)$, where $x_i \in \{0, 1\}^m$, we further assume that each attribute value of the dataset can be represented as a fixed $\ell$-bit binary string, so that we have $m/\ell$ attributes in the dataset. This can be done by padding the values of smaller attributes with extra zeros. 
A \emph{counting query} $q$ on a row is defined as a query whose range is $\{0, 1\}$, i.e., $q$ is a binary function of a row $x$. For the entire dataset, the counting query $q$ is defined as $q(D) \coloneqq \frac{1}{n}\sum_{i = 1}^n q(x_i)$. Our focus is on a restricted form of counting queries, called \emph{predicate queries}, defined as follows. A predicate on an attribute $A$ of the dataset is a boolean function. We extend this definition to the entire row $x \in D$, by assuming the predicate to evaluate to $1$ on all other attributes of the dataset. A predicate query is a conjunction and/or disjunction of predicates. Since a predicate query is a counting query, it follows that the global sensitivity of a predicate query $q$ is given as $\Delta q = 1/n$. 

\begin{example}
Consider a dataset with three attributes $A_1 = \{\texttt{Child}, \texttt{Adult}\}$, $A_2 = \{\texttt{Single}, \texttt{Married}, \texttt{Divorced}\}$ and $A_3 = \{\texttt{Teacher}, \texttt{Student}\}$. By assigning 2 bits per attribute, and encoding the attribute values of each attribute as a two-bit string, we can represent any possible row as a string from $\{0, 1\}^6$. For instance,
\[
\begin{matrix}
    \texttt{Child}: & 00\\
    \texttt{Adult}: & 01
\end{matrix} 
\;
\text{ and }
\;
\begin{matrix}
    \texttt{Single}: & 00\\
    \texttt{Married}: & 01 \\
    \texttt{Divorced}: & 10
\end{matrix} 
\;
\text{ and }
\;
\begin{matrix}
    \texttt{Teacher}: & 00\\
    \texttt{Student}: & 01
\end{matrix}\]
The following are examples of predicates:

\begin{itemize}
    \item $\phi_{\texttt{Single}}: x \texttt{ == **}00\texttt{**}$
    \item $\phi_{\texttt{Married}}: x \texttt{ == **}01\texttt{**}$
    \item $\phi_{\neg \texttt{Divorced}}: x \texttt{ != **}10\texttt{**}$ 
    \item $\phi_{\texttt{Adult}}: x \texttt{ == }01 \texttt{****}$
    \item $\phi_{\texttt{Teacher}}: x \texttt{ == ****}00$
\end{itemize}
where `$*$' denotes the \emph{don't care} character. The following are examples of predicate queries
\begin{itemize}
    \item $\phi_{\texttt{Single}} \wedge \phi_{\texttt{Teacher}}$
    \item $\phi_{\texttt{Adult}} \vee \phi_{\neg \texttt{Divorced}}$
    \item $(\phi_{\texttt{Adult}} \wedge \phi_{\texttt{Single}}) \vee \phi_{\texttt{Teacher}}$
\end{itemize}
\qed
\end{example}

\subsection{Quantum Circuits for Predicate Queries}
In practice, a predicate query will contain a constant number of predicates. Recall that a predicate is evaluated over a single attribute, which means that a predicate query will be evaluated over a constant number of attributes. In what follows we show circuits that combine predicates in a predicate query, and then circuits for the types of predicates considered in this paper.

\subsubsection{Combining Predicates}
Let $q = \phi_i \wedge \phi_j$ be a predicate query on attributes $i$ and $j$. The following reversible circuit computes this (AND) query.

\begin{center}
\begin{quantikz}
\lstick{\ket{\phi_i(x)}} & \ctrl{2} & \\ 
\lstick{\ket{\phi_j(x)}} &  \ctrl{1} & \\
\lstick{\ket{0}} &  \targ{} & \rstick{\ket{q(x)}} \\
\end{quantikz}
\end{center}

where $x \in D$ is an arbitrary row of the dataset. On the other hand, if we have $q = \phi_i \vee \phi_j$ then the following reversible circuit computes this (OR) query.

\begin{center}
\begin{quantikz}
\lstick{\ket{\phi_i(x)}} & \gate{X} & \ctrl{2} & \\ 
\lstick{\ket{\phi_j(x)}} &  \gate{X} & \ctrl{1} & \\
\lstick{\ket{1}} &  & \targ{} & \rstick{\ket{q(x)}} \\
\end{quantikz}
\end{center}

This follows from the fact that $\phi_i \vee \phi_j = \neg (\neg \phi_i \wedge \neg \phi_j)$. Thus a predicate query containing a combination of conjunctions and disjunctions can be evaluated using a combination of these two circuits. 

\subsubsection{Types of Predicates}
We are interested in two particular predicates, equality or inequality checks, and range predicates. These are standard predicates which are expected from census-like datasets~\cite{mckenna2021hdmm, smith2022making}. Both of these predicates can be described in terms of operations on a single bit (or qubit). Consider the equality predicate. That is, $\phi(x) = 1$ if $x = a$, and $0$ otherwise, where $a$ is some attribute value.\footnote{Since $x$ is a row, $x = a$ means that the target attribute of $x$ equals $a$, and the rest can take any value.} Let $a_i$ and $x_i$ denote the $i$th bit in the binary representation of $a$ and $x$, respectively. The following reversible circuit checks if $a_i = x_i$:

\begin{center}
\begin{quantikz}
\lstick{\ket{x_i}} & \ctrl{2} & & \\ 
\lstick{\ket{a_i}} & & \ctrl{1} & \\
\lstick{\ket{1}} & \targ{} & \targ{} & \rstick{\ket{b_i}} \\
\end{quantikz}
\end{center}

Now $x = a$ if $x_i = a_i$ for all $i \in [\ell]$ where $\ell$ is the size of the attribute. This can be achieved by using the circuit for AND query as follows:

\begin{center}
\begin{quantikz}
\lstick{\ket{b_1}} & \ctrl{3} & \\ 
\lstick{\vdots} & \ctrl{2} & \\
\lstick{\ket{b_\ell}} &  \ctrl{1} & \\
\lstick{\ket{0}} &  \targ{} & \rstick{\ket{\phi(x)}} \\
\end{quantikz}
\end{center}

Similarly the predicate $x \neq a$ can be obtained using this circuit by using the ancilla bit $\ket{1}$ instead of $\ket{0}$. To check whether $a \leq x$, i.e., the predicate $\phi(x)$ which is $1$ if $a \leq x$ and $0$ otherwise, we can use the following circuit for each bit. This circuit first checks if $a_i \neq x_i$, and if it is true it flips the last qubit to $0$ if $a_i$ is $1$ but $x_i$ is $0$. 

\begin{center}
\begin{quantikz}
\lstick{\ket{x_i}} & \ctrl{2} & & & \\ 
\lstick{\ket{a_i}} & & \ctrl{1} & \ctrl{2} &\\
\lstick{\ket{0}} & \targ{} & \targ{} & \ctrl{1} & \\
\lstick{\ket{1}} & & & \targ{}& \rstick{\ket{b_i}} \\
\end{quantikz}
\end{center}

We can then combine the results for each bit in $x$ to ascertain whether the predicate is true. The circuit can be tweaked in a straightforward manner to check whether $a \geq x$, and hence we can implement range predicates as well. 

\subsubsection{Cost of Quantum Circuits for Predicate Queries}
For each predicate query we have
\begin{itemize}
    \item A constant number of predicates 
    \item A multi-control Toffoli gate for combining the result of predicates
    \item Single ancillas for combining predicates
    \item A constant number of $X$ gates for joining predicates for an OR query. 
\end{itemize}
On a single predicate we have
\begin{itemize}
    \item A constant number of CNOT and Toffoli gates for each of the $\ell$ bits of the attribute
    \item A multi-control Toffoli gate to combine $\ell$ answers for each bit
    \item A constant number of ancillas per bit
\end{itemize}
Combining all this, we see that we can implement a predicate query using $\mathcal{O}(\ell)$ elementary gates using ancilla (work) qubits~\cite[\S 4.3]{nielsen-chuang}. Evaluating the query on the encoded dataset thus requires $\mathcal{O}(\ell n)$ time.

\section{Differentially Private Mechanisms for Quantum Encoded Datasets}
Recall that the predicate query is defined as the relative sum of the answers to individual rows in the dataset. Thus, given any $x \in \{0, 1\}^m$, we can define the query $q: \{0, 1\}^m \rightarrow \{0, 1\}$ as a Boolean function. Given the basis encoding of the dataset $D$
\[
\ket{\phi_D} = \frac{1}{\sqrt{n}}\sum_{i=1}^n \ket{x_i},
\]
where each $\ket{x_i}$ is a computational basis vector, we can write it as
\begin{equation}
\label{eq:query-phiD-decompose}
\ket{\phi_D} = \frac{1}{\sqrt{n}} \sum_{x \in D: q(x) = 1} \ket{x}  + \frac{1}{\sqrt{n}} \sum_{x \in D: q(x) = 0} \ket{x} 
\end{equation}
The vectors $\ket{x}$ are orthonormal basis vectors of the Hilbert space $\mathcal{H}^{\otimes m}$. The query function $q$ thus partitions $\mathcal{H}^{\otimes m}$ into two subspaces: one spanned by vectors $\ket{x}$ such that $q(x) = 1$, and the other spanned by vectors $\ket{x}$ satisfying $q(x) = 0$. We call them the good subspace and the bad subspace respectively following the language of~\cite{Brassard_2002}. Thus, we get a unique decomposition of $\ket{\phi_D}$ as 
\[
\ket{\phi_D} = \ket{\phi_\mathsf{G}} + \ket{\phi_\mathsf{B}}
\]
where $\ket{\phi_\mathsf{G}}$ belongs to the good subspace and $\ket{\phi_\mathsf{B}}$ to the bad subspace. Note further that
\[
\braket{\phi_\mathsf{G}}{\phi_\mathsf{G}} = \frac{1}{n} \left( \sum_{i \in [n] : q(x_i)  = 1} \bra{x_i} \right) \left(\sum_{j \in [n] : q(x_j)  = 1} \ket{x_j}\right) = \frac{1}{n} \sum_{i \in [n] : q(x_i)  = 1} \delta_{i, j} = \frac{q(D)}{n}  := \alpha
\]
And therefore 
\[
\braket{\phi_\mathsf{B}}{\phi_\mathsf{B}} = 1 - \alpha
\]
Following the discussion in the previous section, we can associate a unitary $U_q$ with a given query $q$ that works on $\ket{\phi_D}$ and an ancilla system $\ket{A}$ and registers the answer of $q(D)$ onto an ancilla bit of $A$. We can then apply a CNOT gate with the control as the qubit corresponding to $q(D)$ and the target as an additional ancilla bit, to obtain the final answer. Mathematically this transforms the system to
\[
(CX)  U_q \ket{\phi_D}\ket{0}\ket{A} \rightarrow (\ket{\phi_\mathsf{G}}\ket{1} + \ket{\phi_\mathsf{B}}\ket{0} ) \otimes \ket{A'}
\]
Next we describe two methods to 
measure the amplitude $\alpha$ of $\ket{\phi_\mathsf{G}}$ which gives the answer to the query. For simplicity, we ignore the ancilla system and assume that the system is in the state 
\begin{equation}
\label{eq:good-bad-state}
\ket{\phi_\mathsf{G}}\ket{1} + \ket{\phi_\mathsf{B}}\ket{0}
\end{equation}

\subsection{Privacy Amplification via Direct Measurement}
In the direct measurement method, we simply measure the ancilla bit in Eq.~\eqref{eq:good-bad-state} in the computational basis. The probability of observing the outcome 1 in a single measurement is $\alpha$. Thus,  over $t$ measurements we can get an estimate of $\alpha$ by averaging the outcomes of individual measurements. 
Furthermore, from Eq.~\eqref{eq:query-phiD-decompose} after each measurement, the first system in Eq.~\eqref{eq:good-bad-state} collapses to the states:
% $\ket{0}$ and apply an operator $Q$ (depending on the query) to obtain a state:
% \[
% Q\ket{\phi_D}\ket{0} = \ket{\phi_\mathsf{G}}\ket{1} + \ket{\phi_\mathsf{B}}\ket{0}
% \]
% From here, measuring the second system in the computational basis repeatedly gives us an approximation of $\alpha$. In the second method, we directly try to measure the amplitude of $\ket{\phi_\mathsf{G}}$ by applying the Grover type oracle and measuring the resulting eigenvalue.
\[
\frac{1}{\sqrt{\alpha n}} \sum_{x \in D: q(x) = 1} \ket{x}   \; \text{ or } \; \frac{1}{\sqrt{(1 - \alpha) n}} \sum_{x \in D: q(x) = 0} \ket{x}  
\]
Thus, in this method the client needs to send multiple encoded copies of the dataset to the server, one for each measurement. Let $\mathcal{M}$ denote this mechanism, and let $\rho_D$ and $\rho_{D'}$ denote the basis encoding of any two neighboring datasets. Let $\alpha = q(D)/n$ and $\alpha' = q(D')/n$. Then given that the query $q$ is a predicate query, we have
\[
| \Pr(\mathcal{M}(\rho_D) = 1) - \Pr(\mathcal{M}(\rho_{D'}) = 1)| = \left|\alpha - \alpha' \right| = \frac{1}{n}
\]
Putting this in Eq.~\eqref{eq:mechanism-exp}, and applying the Chernoff-Hoeffding inequality with parameter $\tau$, we see that with overwhelming probability (in $\tau$ and $t$), the output over any two neighboring datasets is within a factor:
\[
\frac{\tau + 1/n}{\epsilon}
\]
Comparing this to Eq.~\eqref{eq:basis-encoding-lap-scale}:
\[
\frac{\tau + \sqrt{1 - \hat\kappa_\phi}}{\epsilon} = \frac{\tau + \sqrt{2n - 1}/n}{\epsilon}  
\]
we see that the former is less than the latter for any $n > 1$. Thus, one way to answer predicate queries with differential privacy, is to follow the mechanism described in Theorem~\ref{theorem:povm-lap}, but with Laplace noise of scale $\frac{1}{\epsilon}(\tau + 1/n)$. 
Thus, we can apply Laplace noise of smaller scale when answering predicate queries than the scale given for a generic mechanism in~\cite{angrisani2022differential}.
However, both these scales are more than $1/\epsilon n$, which is the scale required to answer a predicate query using the standard differential privacy Laplace mechanism. Here, we show a more refined analysis which exploits the randomness induced by quantum measurements. 

\descr{Refined Privacy Analysis.} 
Our main observation is that measuring the second system in Eq.~\eqref{eq:good-bad-state} and submitting the result is the same as sampling a row $x$ uniformly at random from $D$ and returning the answer $q(x)$ as shown in Algorithm~\ref{algo:equiv-mech}. In both cases, the probability that the result is 1 is exactly $\alpha$. Thus, we can analyze the privacy loss through the following equivalent mechanism.

\begin{algorithm}
\SetAlgoLined
\DontPrintSemicolon
\SetKwInOut{Input}{Input}
\Input{Dataset $D$, query $q$, sampling parameter $t$, integer $k \geq 0$}
$\alpha \leftarrow 0$\;
\For{$i = 1$ \KwTo $t$}{
    $x \leftarrow_{\$} D$\;
    $\alpha \leftarrow \alpha + q(x)$\;
}
\Return $\frac{\alpha}{t} + \text{Lap}\left( \frac{k}{t\epsilon} \right)$
\caption{Equivalent Mechanism $\mathcal{M}$ to Measuring Eq.~\eqref{eq:good-bad-state} $t$ Times}
\label{algo:equiv-mech}
\end{algorithm}

In the following, we prove the following result.

\begin{theorem}
\label{theorem:equiv-mech-dp}
The mechanism $\mathcal{M}$ in Algorithm~\ref{algo:equiv-mech} is $(\epsilon'_k, \delta_k)$-differentially private, where $\epsilon'_k = 0$ if $k = 0$, and for all other values of $k$
\[
\epsilon'_k = \ln \left( \sum_{j=0}^k e^{j\epsilon/k} B(t, j)\right),
\]
and 
\[
\delta_k = 1 - \sum_{j=0}^k  B(t, j)
\]
where 
\[
B(t, j) = \binom{t}{j}\left(\frac{1}{n}\right)^i \left(1 - \frac{1}{n}\right)^{t-j}
\]
\qed
\end{theorem}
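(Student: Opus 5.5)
Without loss of generality let $D'$ be obtained from $D$ by replacing a single row $x^*$ with $x'^*$; this is the notion of neighbour used in the paper. Then $D$ and $D'$ agree on $n-1$ rows. I would couple the two executions of Algorithm~\ref{algo:equiv-mech} through the random indices: in each of the $t$ rounds draw the same index $i \in [n]$ for both datasets. Let $J$ be the number of rounds in which the differing index is drawn. Then $J \sim \mathrm{Bin}(t, 1/n)$, so $\Pr[J=j] = B(t,j)$ (reading the exponent $i$ in the statement as $j$). The sums $\alpha_D$ and $\alpha_{D'}$ can differ only in those $J$ rounds, each contributing at most $1$ in absolute value since $q \in \{0,1\}$. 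Hence, conditioned on $J=j$, the sample averages satisfy $|\alpha_D/t - \alpha_{D'}/t| \le j/t$.

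Next I would split on the good event $J \le k$ and the bad event $J > k$. The bad event has probability exactly $\delta_k = 1-\sum_{j=0}^k B(t,j)$, and this mass becomes the additive $\delta$ term. For $k=0$ the good event is $J=0$, on which the two outputs have identical distributions. This gives $(0,\delta_0)$ with $\delta_0 = 1-(1-1/n)^t$. For $k\ge 1$, on $J=j\le k$ the difference is at most $j/t$, and the noise is $\mathrm{Lap}(k/(t\epsilon))$. The Laplace density ratio gives, for any measurable $S$ and any fixed realization of the sampled rows with $J=j$,
\[
\Pr[\text{out}_D\in S \mid \text{samples}] \le e^{(j/t)\cdot t\epsilon/k}\Pr[\text{out}_{D'}\in S \mid \text{samples}] = e^{j\epsilon/k}\Pr[\text{out}_{D'}\in S\mid\text{samples}].
\]

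Then I would uncondition:
\[
\Pr[\mathcal{M}(D)\in S] \le \sum_{j=0}^k B(t,j)\, e^{j\epsilon/k}\, P'_j(S) + \delta_k,
\]
where $P'_j(S)$ is the conditional probability for $D'$ given $J=j$, averaged over the coupled samples. The main obstacle is converting this into the stated form $e^{\epsilon'_k}\Pr[\mathcal{M}(D')\in S]+\delta_k$ with $e^{\epsilon'_k}=\sum_{j\le k} e^{j\epsilon/k}B(t,j)$. The naive bound $P'_j(S)\le 1$ would only give an additive term. My first attempt would be to bound each $P'_j(S)$ by $\Pr[\mathcal{M}(D')\in S]$ up to a correction. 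This uses the fact that the law of the $n-1$ common rows' contribution is the same for $D$ and $D'$, together with a mixture argument: the $D'$ output is a mixture over $j$ with weights $B(t,j)$, and each component is controlled by the $j=0$ component, which is identical for both datasets.

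Concretely, I would compare every component of $\mathcal{M}(D)$ with the $j=0$-like baseline. Resampling the differing rounds to common rows, as in subsampling amplification arguments, costs a factor $e^{j\epsilon/k}$. Mixing over $j$ with weights $B(t,j)$ then yields the weighted average $\sum_j B(t,j)e^{j\epsilon/k}$ as the multiplicative factor, with the tail $J>k$ absorbed into $\delta_k$. Finally I would check the $k=0$ case separately and note that $\epsilon'_k \le \epsilon$, since the weights sum to at most $1$ and $j\le k$.
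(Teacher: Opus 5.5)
Up to the unconditioning step your argument is the paper's argument: its $I$ is your $J$. Your $k=0$ case is also correct, because there you only need $\Pr[\mathcal{M}(D')\in S,\,J=0]\le \Pr[\mathcal{M}(D')\in S]$. The genuine gap is the step you flag as the main obstacle and then do not actually close. For $k\ge 1$ your plan needs, for every $S$,
\[
\sum_{j\le k} B(t,j)\,e^{j\epsilon/k}\,P'_j(S)\;\le\; e^{\epsilon'_k}\,\Pr[\mathcal{M}(D')\in S],
\]
with all of $\{J>k\}$ charged to $\delta_k$. This inequality is false in general.

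\emph{Counterexample to the needed inequality.} Let every row shared by $D$ and $D'$ have $q=0$, let $q(x^*)=0$ and $q(x'^*)=1$, and take $S=(-\infty,0]$.
\begin{itemize}
\item $\mathcal{M}(D)\sim\mathrm{Lap}(k/(t\epsilon))$ whatever $J$ is, while $\mathcal{M}(D')=J/t+\mathrm{Lap}(k/(t\epsilon))$.
\item Hence $P'_j(S)=\tfrac12 e^{-j\epsilon/k}$, and the display reduces to $\sum_{j\le k}B(t,j)\le e^{\epsilon'_k}\sum_{j=0}^{t}B(t,j)e^{-j\epsilon/k}$.
\item For $t=n=1000$, $k=1$, $e^{\epsilon}=2$ (so $\epsilon'_1\approx 0.099>0$), the left side is about $0.736$ and the right side about $0.669$.
\end{itemize}
The theorem's own inequality still holds for this $S$ ($0.5\le 0.335+0.264$). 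It holds only because the $J>k$ branch puts just half of its mass $\delta_k$ into $S$. So no argument of the form ``dominate the $J\le k$ part, discard $J>k$ into $\delta_k$'' can give the stated $\epsilon'_k$.

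\emph{What your baseline comparison actually yields.} Carried out, it gives $P_j(S)\le e^{j\epsilon/k}P_0(S)$, where $P_j(S)=\Pr[\mathcal{M}(D)\in S\mid J=j]$. But $P_0(S)=P'_0(S)$ is still a conditional probability. The same coupling applied on the $D'$ side only gives $\Pr[\mathcal{M}(D')\in S]\ge P_0(S)\sum_j B(t,j)e^{-j\epsilon/k}$. So what your route proves is $\bigl(\epsilon'_k-t\ln(1-(1-e^{-\epsilon/k})/n),\,\delta_k\bigr)$-DP. This is still an amplified guarantee, but its privacy parameter is strictly larger than the one stated.

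The paper's proof does not supply this step either. It writes $\Pr[\mathcal{M}(D)\in S\mid I=0]=\Pr[\mathcal{M}(D')\in S]$ and $\Pr[\mathcal{M}(D)\in S\mid I=j]\le e^{j\epsilon/k}\Pr[\mathcal{M}(D')\in S]$. That is, it silently replaces the conditional law of $\mathcal{M}(D')$ given $I=j$ by its unconditional law, which is exactly the substitution you correctly identified as unjustified. Your proposal therefore follows the paper's route, and neither establishes the stated $\epsilon'_k$ for $k\ge 1$. A proof of that constant, if it holds at all, would have to use how the $J>k$ mass is distributed rather than discarding it.
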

To prove this result, let us denote a generic pair of neighboring datasets by $D$ and $D'$. Let us further assume without loss of generality that the two differ in the $i$th row $x_i$. The answer to the query on the two datasets can differ by $1/n$ as shown in Section~\ref{subsec:pred-queries}.  Let $I$ be a random variable representing the number of times the row $x_i$ is sampled in the $t$ iterations of the mechanism in Algorithm~\ref{algo:equiv-mech}. Thus $I$ takes on values in the set $\{0, 1, \ldots, t\}$. Now take any subset $S$ of the range of $\mathcal{M}(D)$. We have
\begin{equation}
\label{eq:equiv-mech-prob}
\Pr[\mathcal{M}(D) \in S] = \sum_{j = 0}^t \Pr[\mathcal{M}(D) \in S \mid I = j] \Pr[I = j]    
\end{equation}
Clearly $I$ is binomially distributed as
\[
\Pr[I = j] = \binom{t}{j}\left(\frac{1}{n}\right)^i \left(1 - \frac{1}{n}\right)^{t-j} := B(t, j)
\]
The idea is that if we assume different values of $I$, we can prove a spectrum of $(\epsilon, \delta)$-DP guarantees from Eq.~\eqref{eq:equiv-mech-prob}. For instance, let us first see what happens if $I = 0$, i.e., $x_i$ is never sampled in the $t$ samples. We get
\begin{align}
    \Pr[\mathcal{M}(D) \in S] &= \sum_{j = 0}^t \Pr[\mathcal{M}(D) \in S \mid I = j] \Pr[I =j] \nonumber\\
    &= \Pr[\mathcal{M}(D) \in S \mid I = 0] \Pr[I = 0] + \sum_{j = 1}^t \Pr[\mathcal{M}(D) \in S \mid I = j] \Pr[I = j] \nonumber \\
    &= \Pr[\mathcal{M}(D') \in S] \Pr[I = 0] + \sum_{j = 1}^t \Pr[\mathcal{M}(D) \in S \mid I = j] \Pr[I = j] \nonumber \\
    &\leq \Pr[\mathcal{M}(D') \in S] \Pr[I = 0] + \sum_{j = 1}^t \Pr[I = j] \nonumber \\
    &= \Pr[\mathcal{M}(D') \in S] \Pr[I = 0] + (1 - \Pr[I = 0]) \nonumber\\
    &= \Pr[\mathcal{M}(D') \in S] (1 - \delta_0) + \delta_0 \nonumber\\
    &= \Pr[\mathcal{M}(D') \in S] - \Pr[\mathcal{M}(D') \in S] \delta_0 + \delta_0 \nonumber\\
    &\leq \Pr[\mathcal{M}(D') \in S] + \delta_0 \label{eq:zero-delta},
\end{align}
where we have used the fact that the answer to the query $q$ is the same on datasets $D$ and $D'$ if the $i$th row is never sampled and hence $\Pr[M(D) \in S \mid I = 0] = \Pr[M(D') \in S]$. We have further defined
\[
\delta_0 := 1 - \Pr[I = 0] = 1 - \left(\frac{n-1}{n}\right)^t = 1 - B(t, 0)
\]
By the symmetry of the argument, the same equation holds with the roles of $D$ and $D'$ interchanged. Hence, the mechanism is $(0, \delta_0)$-differentially private. In other words, we get differential privacy without adding explicit noise! This value of $\delta_)$ however, is rather high. For example, if $n = 10^6$ and $t = 10^3$, then we get $\delta_0 = 0.0009995 < 10^{-3}$. Thus, while this gives us DP, we cannot use such a low value of $\delta_0$ for repeated application of this mechanism, since then the $\delta$'s add up through the composition theorem (Theorem~\ref{prop:composition}).

Returning to Eq.~\eqref{eq:equiv-mech-prob}, let us now assume that only $I = 0$ and $I = 1$ are likely. We get

\begin{align}
    \Pr[\mathcal{M}(D) \in S] &= \sum_{j = 0}^t \Pr[\mathcal{M}(D) \in S \mid I = j] \Pr[I = j] \nonumber\\
    &= \Pr[\mathcal{M}(D) \in S \mid I = 0] \Pr[I = 0] + \Pr[\mathcal{M}(D) \in S \mid I = 1] \Pr[I = 1] \nonumber \\
    &+ \sum_{j = 2}^t \Pr[\mathcal{M}(D) \in S \mid I = j] \Pr[I = j] \nonumber \\
    &= \Pr[\mathcal{M}(D') \in S] \Pr[I = 0] + \Pr[\mathcal{M}(D) \in S \mid I = 1] \Pr[I = 1] \nonumber \\
    &+ \sum_{j = 2}^t \Pr[M(D) \in S \mid I = j] \Pr[I = j] \nonumber
\end{align}
where we have once again used the fact that if the event $I = 0$ occurs, then the outcome of $\mathcal{M}$ is the same over the two datasets. Now if the event $I = 1$ occurs, $x_i$ occurs exactly once in the $t$ iterations. Thus, the answer to $q$ can change by at most $1/t$ in the neighboring datasets. Thus, adding Laplace noise of scale $1/t\epsilon$, makes this mechanism $(\epsilon, 0)$-DP. Therefore
\[
\Pr[\mathcal{M}(D) \in S \mid I = 1] \leq e^\epsilon \Pr[\mathcal{M}(D') \in S]
\]
and the above equation becomes
\begin{align}
    \Pr[\mathcal{M}(D) \in S] &\leq \Pr[\mathcal{M}(D') \in S] \Pr[I = 0] + e^\epsilon  \Pr[\mathcal{M}(D') \in S]\Pr[I = 1] \nonumber \\
    &+ \sum_{j = 2}^t \Pr[\mathcal{M}(D) \in S \mid I = j]\Pr[I = j] \nonumber\\
    &=  (\Pr[I = 0] + e^\epsilon  \Pr[I = 1])\Pr[M(D') \in S] \nonumber \\
    &+ \sum_{j = 2}^t \Pr[\mathcal{M}(D) \in S \mid I = j] \Pr[I = j]\nonumber \\
    &= \left( \left(\frac{n-1}{n}\right)^t + e^\epsilon \left(\frac{t}{n}\left(\frac{n-1}{n}\right)^{t-1}\right)\right) \Pr[\mathcal{M}(D') \in S] \nonumber\\
    &+ \sum_{j = 2}^t \Pr[\mathcal{M}(D) \in S \mid I = j] \Pr[I = j]\nonumber \\
    &= \left(\frac{n-1}{n}\right)^{t-1} \left(1 - \frac{1}{n} + \frac{te^\epsilon}{n} \right) \Pr[\mathcal{M}(D') \in S] \nonumber \\
    &+ \sum_{j = 2}^t \Pr[\mathcal{M}(D) \in S \mid I = j] \Pr[I = j]\nonumber\\
    &= e^{\epsilon'_1} \Pr[\mathcal{M}(D') \in S] + \sum_{j = 2}^t \Pr[\mathcal{M}(D) \in S \mid I = j] \Pr[I = j]\nonumber\\
    &\leq e^{\epsilon'_1} \Pr[\mathcal{M}(D') \in S] + \sum_{j = 2}^t \Pr[I = j]\nonumber\\
    &\leq e^{\epsilon'_1} \Pr[\mathcal{M}(D') \in S] + \delta_1 \label{eq:two-terms}
\end{align}
where
\begin{align}
 \delta_1 &:= \sum_{j = 2}^t \Pr[I = j] \nonumber\\
        &= 1 - \Pr[I = 0] - \Pr[I = 1]  \nonumber\\
        &= 1 - \sum_{j = 0}^1 B(t, j) \nonumber\\
        &= 1 - \left(\frac{n-1}{n}\right)^t  -  \frac{t}{n}\left(\frac{n-1}{n}\right)^{t-1}  \nonumber\\
        &= 1 - \left(\frac{n-1}{n}\right)^{t-1}\left( \frac{n + t - 1 }{n}\right) \label{eq:delta-sampling}
\end{align}
and 
\begin{align}
    e^{\epsilon'_1} = \left(\frac{n-1}{n}\right)^{t-1} \left(1 - \frac{1}{n} + \frac{te^\epsilon}{n} \right) \nonumber\\
\Rightarrow \epsilon'_1 = (t-1)\ln\left( \frac{n-1}{n} \right) + \ln \left(1 - \frac{1}{n} + \frac{te^\epsilon}{n} \right) \label{eq:eps-sampling}
\end{align}
For concrete values, if $t = 10^3$ and $n = 10^6$, then with $\epsilon = 1$, we get 
\[
\epsilon'_1 \approx 0.0017146 \text{ and }\delta_1 \approx 0.5 \times 10^{-6}
\]
Note that this corresponds to $k = 1$ in Algorithm~\ref{algo:equiv-mech}. In general, for any $k \geq 1$, we add noise $\text{Lap}(k/t\epsilon)$ to the answer in the last step of the algorithm. This is justified since for any event $I = j$, with $j \leq k$, the sample $x_i$ occurs exactly $j$ times in the $t$ iterations. This means that the answer to $q$ can change by at most $j/t$ in the two datasets. Adding Laplace noise of scale $k/t\epsilon$, implies Laplace noise of scale 
\[
\frac{j}{t\epsilon(\frac{j}{k})},
\]
and hence the mechanism is $(\frac{j}{k}\epsilon, 0)$-DP. With this refinement we get that for any $k \geq 0$, we get 
\[
    \epsilon'_k = \begin{cases}
        0, & \text{ if } k = 0 \\
        \ln \left( \sum_{j=0}^k e^{j\epsilon/k} B(t, j)\right) & \text{ if } k \geq 1
    \end{cases}
\]
% \[
% \epsilon'_k = \ln \left( \sum_{j=0}^k e^{j\epsilon/k} B(t, j)\right),
% \]
and 
\[
\delta_k = 1 - \sum_{j=0}^k  B(t, j)
\]
where $B(t, j)$ is the $j$th term of the binomial distribution related to $I$, i.e., $\Pr[I = j]$. For example, with $k = 2$ and $\epsilon = 1$, we get 
\[
\epsilon'_2 \approx 0.0006487, \delta'_2 \approx 1.6604 \times 10^{-10},
\]
with the tradeoff that we need to add noise of higher scale ($k = 2$ times more) in the algorithm. 

To recap, we have shown that the mechanism of direct measurement, i.e., one that measures the state in Eq.~\eqref{eq:good-bad-state} a total of $t$ times, and returns the average of outcomes with Laplace noise of scale $k/t\epsilon$ is
\begin{itemize}
    \item $(0, \delta_0)$-DP with $k = 0$. In other words, the mechanism satisfies differential privacy without adding explicit noise. On the downside it cannot be used for multiple queries as practical values of $\delta_0$ are rather high.
    \item $(\epsilon'_k, \delta_k)$-DP with $k \geq 1$, where both $\epsilon'_k$ and $\delta_k$ are described in the statement of Theorem~\ref{theorem:equiv-mech-dp}. The tradeoff is that we add noise of higher scale with larger $k$, but the values of $\epsilon'_k$ and $\delta_k$ can be made small. Crucially, this refined analysis shows that much less noise needs to be added through this mechanism than the simpler analysis at the start of this section or through Eq.~\ref{eq:basis-encoding-lap-scale} derived from the work in~\cite{angrisani2022differential}. For instance, if $t = 10^3$ is enough to give us a good estimate of the true answer, then adding noise of scale $\text{Lap}(2/t\epsilon)$, with $\epsilon = 1$ (and $k = 2$), gives us a privacy budget consumption of just $\epsilon'_2 \approx 0.0006487$ as stated above. 
\end{itemize}

\subsection{Amplitude Estimation via Phase Estimation}
\label{subsec:amp-est}
Another way to measure the query answer $\alpha$ is via amplitude estimation methods~\cite{suzuki2020amplitude, uno2021modified, callison2023improvedmaximumlikelihoodquantumamplitude, aaronson2020quantum, giurgica2022low, grinko2021iterative}. This follows because in Eq.~\eqref{eq:good-bad-state}, $\alpha$ is the amplitude of the state $\ket{\phi_\textsf{G}}$.\footnote{Note that we are deliberately cavalier in calling $\alpha$ both the amplitude and its square as the distinction does not matter in our case.} Among the amplitude estimation methods, the canonical method by Brassard et al \cite{Brassard_2002} suffices for our purpose assuming an error corrected and well controlled quantum computing environment. Another reason to choose this method is because it requires only 24 initial states or independent sampling to get a confidence level more than 99\% in estimating the amplitude. Before we discuss how the amplitude estimation method can be made differentially private, we discuss its efficiency as compared to the direct measurement method in the previous section. 

%Suppose we outsource the whole calculation to the server, ie, we are going to send limited number of copies of the initial state and the server will make the measurement and estimate the amplitude by the amplitude estimation methods \cite{brakerski2018quantum,suzuki2020amplitude, uno2021modified, maronese2023quantum, callison2023improvedmaximumlikelihoodquantumamplitude, aaronson2020quantum, giurgica2022low, grinko2021iterative}. Why amplitude estimation?! It is because the we want to see the query result that is $q(X)= \alpha$ which is precisely the amplitude of the state after running our query algorithm that we can see in \eqref{eq:query-phiD-decompose}. Among these methods the original canonical amplitude estimation method by Brassard et al.\cite{Brassard_2002} is most efficient for our purpose or in general if it happens to be a most ideal error corrected and well controlled quantum computing environment. Before we go into how this amplitude estimation method can help us with differential privacy, we must discuss how it will be efficient in terms of efficient computing and how does it work. 

Given the decomposed state
\[
\ket{\phi_D} = \ket{\phi_\textsf{G}} + \ket{\phi_\textsf{B}}
\]
from Eq.~\eqref{eq:good-bad-state} (ignoring the ancilla system), amplitude estimation applies an operator defined as $Q = -AS_0A^{\dagger}S_\mathsf{G}$, where $S_0 = \mathbb{I}_n - 2\ket{0}_n\bra{0}_n$ and $S_\textsf{G} = \mathbb{I}_n - 2\proj{\phi_\textsf{G}}$. Furthermore, $A$ is the quantum Fourier transform defined as
\[
\text{QFT}_M: |x\rangle
\rightarrow  \frac{1}{\sqrt{M}} \sum_{y=0}^{M-1} e^{i 2\pi  x y / M}\, |y\rangle
\]
% The canonical amplitude estimation method can be stated as for an operator $A$ acting on $n$ qubit register as:
% \[
% A\ket{0 } = \sqrt{1-\alpha}\ket{\Psi_B}+\sqrt{\alpha}\ket{\Psi_G}. 
% \]
% This $A$ can be thought of as the initial state followed by the \texttt{CNOT} operation in our case (in our case the register will consist $n+1$ qubits so you can see the 'good' states ie, \ket{\Psi_G} are all the odd states but anyway it will not change anything in the calculation so we are considering the register will have $n$ qubit). Let us define an operator $Q = -AS_0A^{\dagger}S_G$ such that $S_0 = \mathbb{I}_n - 2\ket{0}_n\bra{0}_n$ and $S_G = \mathbb{I}_n - 2\ket{\Psi_G}_n\bra{\Psi_G}_n$. 
The operator $Q$ works as a Grover type query operator which acts $M$ times to estimate the amplitude. This method requires $\mathcal{O}(1/\Delta)$ query calls (the parameter $M$) to estimate the amplitude within a $\Delta$ error bound, whereas the direct measurement method requires $\mathcal{O}(1/\Delta^2)$ measurements for the same error bound (see Appendix~\ref{app:comp-analysis}). So, this is a quadratic speedup. To be more precise, if the real amplitude is $\alpha$ and the estimated amplitude is $\widetilde{\alpha}$ then~\cite[Theorem 12]{Brassard_2002},
\[
|\alpha - \widetilde{\alpha}| \leq \frac{2\pi\sqrt{\alpha(1-\alpha)}}{M}+\frac{\pi^2}{M^2} = \Delta = \mathcal{O}\left(\frac{1}{M}\right)
\]
with probability (confidence) $\geq \frac{8}{\pi^2}$. The usual circuit to implement this algorithm is given in Figure~\ref{fig:can-amp-est}.

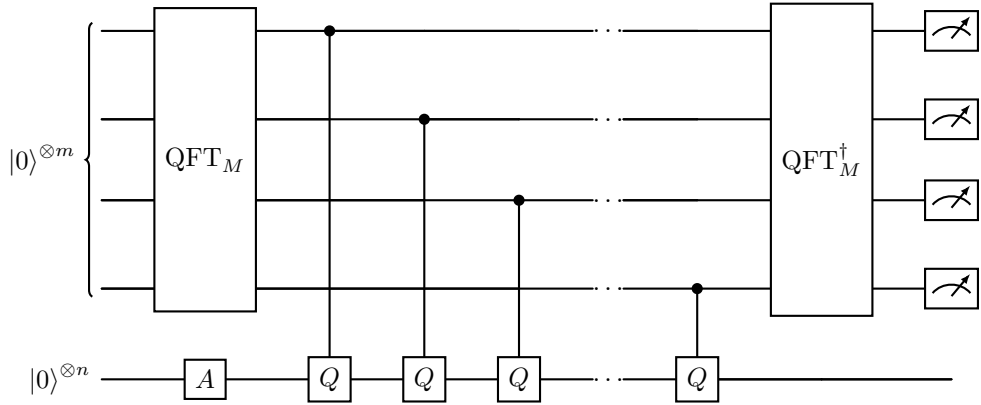
\begin{figure}
    \centering
    \begin{quantikz}[row sep=0.55cm, column sep=0.7cm]
\lstick[wires=4]{$\ket{0}^{\otimes m}$}
  & \gate[wires=4]{\text{QFT}_M}
  & \ctrl{4} & \qw         & \qw     & \push{\hdots}    & \qw              & \gate[wires=4]{\text{QFT}^{\dagger}_M} & \meter{} \\
  & \qw
  & \qw     & \ctrl{3}     & \qw      & \push{\hdots}         & \qw              &                              & \meter{} \\
  & \qw
  & \qw     & \qw          & \ctrl{2}    & \push{\hdots}    & \qw              &                              & \meter{} \\
  & \qw
  & \qw     & \qw          & \qw         & \push{\hdots}     & \ctrl{1}         &                              & \meter{} \\
\lstick{$\ket{0}^{\otimes n}$}
  & \gate{A}
  & \gate{Q} & \gate{Q} & \gate{Q}  & \push{\hdots}  & \gate{Q} & \qw & \qw
\end{quantikz}
\caption{Circuit to implement canonical quantum amplitude estimation}
\label{fig:can-amp-est}
\end{figure}

\descr{Improving the Confidence Bound.} The confidence level of $8/\pi^2$ can be made arbitrarily close to 100\% by using the amplitude estimation algorithm independently over multiple initial states, and then taking the median value. To see this let $X_i$ denote a binary random variable which equals $1$ if $|\alpha - \widetilde{\alpha}| > \Delta$ in the $i$th run of the amplitude estimation algorithm. As shown above, this happens with probability $1 - p \leq 1 - 8/\pi^2$. Let $X = \sum_{i = 1}^t X_i$. Through median estimation, we take the median of the $t$ estimations as the value for $\alpha$. The median estimation will fail, i.e., this value will have additive error $> \Delta$, if $X \geq t/2$. Thus we are interested in knowing $\Pr[X \geq t/2]$. This probability is maximized if each $X_i$ is $1$ with probability $1 - p$ exactly equal to $1 - 8/\pi^2$. Assuming this to be the case, $X$ is binomially distributed with mean $(1-p)t$. By Hoeffding's inequality
\begin{align*}
    \Pr\left[X \geq \frac{t}{2}\right] &= 
    \Pr\left[X - (1-p)t \geq \frac{t}{2} - (1 -p)t \right]\\
    &= \Pr\left[X - (1-p)t\geq t\left( p - \frac{1}{2}\right)\right]\\
    &\leq \exp\left( - \frac{2(t(p-1/2))^2}{t}\right) \\
    &= \exp(-2t(p-1/2)^2)\\
    &= \exp\left( -2 t \left( \frac{8}{\pi^2} - \frac{1}{2}\right)^2\right)
\end{align*}
Thus, for instance, to improve the probability of success from $8/\pi^2 \approx 0.81$ to $0.99$, we can use $t = 24$ repetitions.

\descr{Details of the Algorithm.} To understand where we can apply differential privacy, let us dive deeper into the algorithm. 
%If we run canonical AE algorithm on a database to find out the amplitude of good state of a desired query, we need to understand what answer we may expect after the measurement and to determine so we must see the state before the measurement so that we can expect what type of outcomes are possible as an outcome. Let us assume the initial state is $\ket{\Psi}=A\ket{0^n}$ which is the data set after getting encoded into the quantum state. 
As in \cite{Brassard_2002}, and detailed at the start of this section, $\ket{\phi_D}$ can be broken down into the good and the bad states as 
\[
\ket{\phi_D} = \ket{\phi_\mathsf{G}} + \ket{\phi_\mathsf{B}}
\]
Define 
\[
\alpha := \braket{\phi_\mathsf{G}}{\phi_\mathsf{G}} = \sin^2 \theta_\alpha, \quad 0 \leq \theta_\alpha \leq \frac{\pi}{2}
\]
% \[
% \ket{\Psi}=\ket{\Psi_G}+\ket{\Psi_B},\qquad
% a:=\langle \Psi_G|\Psi_G \rangle =  \sin^2\theta_{\alpha}, \quad 0\le \theta_{\alpha}\le \frac{\pi}{2}.
% \]
The vector $\ket{\phi_D}$ lies in the subspace $H_{\phi_D}=\mathrm{span}\{\ket{\phi_\mathsf{G}},\ket{\phi_\mathsf{B}}\}$. The operator $Q$ leaves the subspace $H_{\phi_D}$ invariant~\cite[Lemma 1]{Brassard_2002}. Since $Q$ is unitary, $H_{\phi_D}$ has an orthonormal basis consisting of the two eigenvectors of $Q$ given as:
\[
\ket{\phi_\pm}
=\frac{1}{\sqrt{2}}
\left(\frac{\ket{\phi_\mathsf{G}}}{\sqrt{{\alpha}}} \pm i\,\frac{\ket{\phi_\mathsf{B}}}{\sqrt{1-{\alpha}}}\right),
\]
with eigenvalues
\[
Q\ket{\phi_\pm}=e^{\pm i2\theta_{\alpha}}\ket{\phi_\pm}
\]
% With $S_G$ flip the phase of good states, and $S_0$ flip only $\ket{0^n}$, let us define the Grover type oracle
% \[
% Q \;=\; -A\,S_0\,A^{\dagger}\,S_G.
% \]
% And the eigenvectors of the Hilbert space, $H_\Psi=\mathrm{span}\{\ket{\Psi_1},\ket{\Psi_0}\}$,
% \[
% \ket{\Psi_\pm}
% =\frac{1}{\sqrt{2}}
% \left(\frac{\ket{\Psi_G}}{\sqrt{{\alpha}}} \pm i\,\frac{\ket{\Psi_B}}{\sqrt{1-{\alpha}}}\right),
% \qquad
% Q\ket{\Psi_\pm}=e^{\pm i\,2\theta_{\alpha}}\ket{\Psi_\pm}
% \]
Under the $\ket{\phi_\pm}$ basis, $\ket{\phi_D}$ can be expressed as 
\[
\ket{\phi_D}=
%A\ket{0}^{\otimes n}
-\frac{i}{\sqrt{2}}\Big(e^{i\theta_{\alpha}}\ket{\phi_+}-e^{-i\theta_{\alpha}}\ket{\phi_-}\Big).
\]
Brassard et al.~\cite{Brassard_2002} define the operator $\Lambda_M$ which acts on any unitary operator $U$ as 
\[
\ket{j}\ket{y} \rightarrow \ket{j}U^j\ket{y}, \quad 0 \leq j < M
\]
With this definition, the operator $\Lambda_M(Q)$ on the state $\ket{j}\ket{\phi_\pm}$, where $\ket{\phi_\pm}$ are the eigenvectors of $Q$, results in $\ket{j}Q^j\ket{\phi_\pm} = \ket{j} e^{\pm i 2j\theta_\alpha}\ket{\phi_\pm}$. They further define
\[
S_M(\omega) = \frac{1}{\sqrt{M}} \sum_{y=0}^{M-1} e^{i 2\pi  \omega y}\, |y\rangle
\]
for $0 \leq \omega < 1$, from which it follows that 
\[
\text{QFT}_M \ket{x} = S_M(x/M)
\]
Note that by definition 
\[
\text{QFT}^\dagger_M (S_M(x/M)) = \ket{x}
\]
We can now describe the working of their amplitude estimation algorithm. We first prepare the whole state including the ancilla system as 
\[
\ket{\phi_D^{(1)}}
=\ket{0}^{\otimes m}\otimes\left[-\frac{i}{\sqrt{2}}\Big(e^{i\theta_{\alpha}}\ket{\phi_+)}-e^{-i\theta_{\alpha}}\ket{\phi_-}\Big)\right]
\]
where the second system is $\phi_D$ in the basis $\ket{\phi_\pm}$ as shown above. We then apply $\text{QFT}_M$ to the first register to obtain:
\[
\ket{\phi_D^{(2)}}
=\frac{1}{\sqrt{2M}}\sum_{j=0}^{M-1}\ket{j}\otimes
\Big(e^{i\theta_{\alpha}}\ket{\phi_+}-e^{-i\theta_{\alpha}}\ket{\phi_-}\Big).
\]
Next we apply $\Lambda_M(Q)$ on this state, and use the fact that $Q^j\ket{\phi_\pm} = e^{\pm i 2j\theta_\alpha}\ket{\phi_\pm}$
 to get
\begin{align}
\ket{\phi_D^{(3)}}
&=\frac{1}{\sqrt{2M}}\sum_{j=0}^{M-1}\ket{j}\otimes
\Big(e^{i\theta_{\alpha}}e^{+i 2j\theta_{\alpha}}\ket{\phi_+}-e^{-i\theta_{\alpha}}e^{-i2j\theta_{\alpha}}\ket{\phi_-}\Big)\nonumber\\
&=\frac{e^{i\theta_{\alpha}}}{\sqrt{2}}\left(\frac{1}{\sqrt{M}}\sum_{j=0}^{M-1}e^{i2 j \theta_{\alpha}}\ket{j}\right)\!\otimes\ket{\phi_+}
\;-\;
\frac{e^{-i\theta_{\alpha}}}{\sqrt{2}}\left(\frac{1}{\sqrt{M}}\sum_{j=0}^{M-1}e^{-i 2 j \theta_{\alpha}}\ket{j}\right)\!\otimes\ket{\phi_-}\nonumber\\
&=\frac{e^{i\theta_{\alpha}}}{\sqrt{2}}\;\ket{S_M(\tfrac{\theta_{\alpha}}{\pi})}\otimes\ket{\phi_+}
\;-\;
\frac{e^{-i\theta_{\alpha}}}{\sqrt{2}}\;\ket{S_M(1-\tfrac{\theta_{\alpha}}{\pi})}\otimes\ket{\phi_-} \label{eq:phiD_3}
\end{align}
which follows since  $e^{-i2j\theta_{\alpha}}=e^{i2\pi (-\theta_{\alpha}/\pi)j}=e^{i2\pi (1-\theta_{\alpha}/\pi)j}$ for integer $j$.
At this point we apply $\text{QFT}_M^{\dagger}$ to the first register and measure the first system in the computational basis. As shown in~\cite{Brassard_2002}, this gives us a value of $y$ in the range $0 \leq y < M$, and using the relation $\frac{\theta_\alpha}{\pi} \approx \frac{y}{M}$, we get the estimate of $\alpha$ as $\widetilde{\alpha} = \sin^2(\pi y/M)$. 

To make this mechanism differentially private, we need to find the maximum change in $\theta_\alpha$ given that the global sensitivity of $q(D)/n = \alpha$ is $1/n$. We can then apply Laplace noise of appropriate scale.

\subsubsection{Global Sensitivity Analysis}
As before let $\alpha = q(D)/n$ and $\alpha' = q(D')/n$, where $D$ and $D'$ are neighboring  datasets. Define $\theta_\alpha$ and $\theta_{\alpha'}$ as $\alpha = \sin^2 \theta_\alpha$ and $\alpha' = \sin^2 \theta_{\alpha'}$, respectively where $0 \leq \theta_\alpha, \theta_{\alpha'} \leq \pi/2$. We already know that
\[
| \alpha - \alpha'| = \frac{1}{n}
\]
We are interested in finding an upper bound on 
\[
|\theta_\alpha - \theta_{\alpha'}| \quad \text{ given } | \alpha - \alpha'| = |\sin^2 \theta_\alpha - \sin^2 \theta_{\alpha'}| = \frac{1}{n}
\]
Let us define $f(x) = \sin^2 x$, where $0 \leq x \leq \pi/2$. This function is differentiable with the derivative:
\[
f'(x) = 2\sin x \cos x = \sin 2 x
\]

Note that $\sin^2 x$ is a concave function on the interval $[\pi/4, \pi/2]$, since its derivative, i.e., $\sin 2x$, is decreasing on this interval~\cite{thomas2014thomas} (see Figure~\ref{fig:sin2x-max}). 
%A continuous function $f$ is concave on the interval $I$ if for any two points $x_1, x_2 \in I$, and any $\lambda \in [0, 1]$ we have
%\[
%f(\lambda x_1 + (1 - \lambda)x_2) \geq \lambda f(x_1) + (1-\lambda)f(x_2)
%\]
%See for example [WOLFRAM CONVEX]. 

\begin{figure}
    \centering
\begin{tikzpicture}
    
  \begin{axis}[
    domain=0:pi/2,
    samples=200,
    xmin=0, xmax=1.57079632679,   % pi/2
    ymin=0, ymax=1,
    axis lines=middle,
    xlabel={$\theta$},
    ylabel={$\sin^2 \theta$},
    xtick={0, pi/4, pi/2},
    xticklabels={$0$, $\tfrac{\pi}{4}$, $\tfrac{\pi}{2}$},
    ytick={0,1/2,1},
    yticklabels={0, 1/2, 1},
    grid=both,
    grid style={dashed,gray!30},
    width=8cm,
    height=6cm,
    enlargelimits=false,
    clip=true,
    axis line style={thick, ->, >={Stealth[, length=3mm, width=2mm]}},
    axis on top=true,
  ]

    % Dot coordinates
    \pgfmathsetmacro{\xdot}{pi/4 + 0.35}
    \pgfmathsetmacro{\ydot}{0.822}

    % Shaded regions behind the curve
    \fill[gray!30,opacity=0.7] (axis cs:\xdot,0) rectangle (axis cs:pi/2,\ydot);  % bottom-right
    \fill[gray!30,opacity=0.7] (axis cs:\xdot,\ydot) rectangle (axis cs:pi/2,1);  % top-right
    \fill[gray!30,opacity=0.7] (axis cs:0,\ydot) rectangle (axis cs:\xdot,1);    % top-left
    % Bottom-left rectangle is optional; leave unshaded to keep L-shape

    % Plot sin^2 x on top
    \addplot [thick,domain=0:pi/2] {sin(deg(x))^2};

    % Dot on top of everything
    \addplot+[only marks, mark=*, mark options={fill=black,draw=black}] coordinates { (\xdot, \ydot) };
    % Draw a horizontal double-headed arrow with n on top
    \draw[<->, thick] (113,48) -- (157,48) node[midway, above] {$\sin^{-1}\left( \frac{1}{\sqrt{n}} \right)$};
    \draw[<->, thick] (81,82) -- (81,100) node[midway, left] {$\frac{1}{n}$};
  \end{axis}
\end{tikzpicture}
    \caption{An illustration of the fact that if $|\sin^2 \theta_1 - \sin^2 \theta_2| = 1/n$, for any two angles $\theta_1, \theta_2 \in [0, \pi/2]$, then the maximum possible absolute distance between the angles is achieved in the period where the slope of $\sin^2 \theta$ is the flattest. This corresponds with one of the end-points of the period being either $\pi/2$ or $0$ due to the symmetry of the function around $\pi/4$.}
    \label{fig:sin2x-max}
\end{figure}
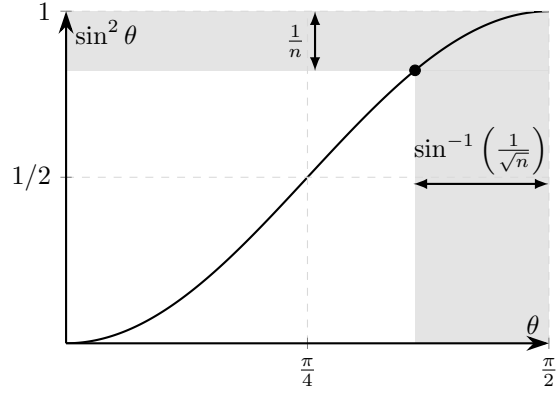

The secant slope is the average rate of change of a function $f$ on an interval $[x_1, x_2]$:
\[
\frac{f(x_2) - f(x_1)}{x_2 - x_1}
\]

\begin{theorem}[Mean Value Theorem~\cite{thomas2014thomas}]
\label{theorem:mvt}
Suppose $f(x)$ is continuous over an interval $[x_1, x_2]$ and differentiable on $(x_1, x_2)$ then there is at least one point $c \in (x_1, x_2)$ such that 
\[
f'(c) = \frac{f(x_2) - f(x_1)}{x_2 - x_1}
\]
\qed
\end{theorem}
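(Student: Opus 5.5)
We will reduce the statement to Rolle's theorem by subtracting off the secant line. Define the auxiliary function
\[
g(x) = f(x) - f(x_1) - \frac{f(x_2) - f(x_1)}{x_2 - x_1}\,(x - x_1), \qquad x \in [x_1, x_2].
\]
Since $f$ is continuous on $[x_1, x_2]$ and differentiable on $(x_1, x_2)$, and the subtracted term is affine in $x$, the function $g$ inherits both properties. A direct substitution gives $g(x_1) = 0$ and $g(x_2) = 0$. Its derivative on $(x_1, x_2)$ is
\[
g'(x) = f'(x) - \frac{f(x_2) - f(x_1)}{x_2 - x_1}.
\]
It therefore suffices to exhibit a point $c \in (x_1, x_2)$ with $g'(c) = 0$.

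To find such a $c$ we prove Rolle's theorem for $g$. By the extreme value theorem, the continuous function $g$ attains a maximum and a minimum on the compact interval $[x_1, x_2]$. There are two cases.
\begin{itemize}
\item If both the maximum and the minimum are attained only at the endpoints, then, since $g(x_1) = g(x_2) = 0$, $g$ is identically zero. Then $g' \equiv 0$ on $(x_1, x_2)$, and any interior $c$ works.
\item Otherwise some extremum is attained at an interior point $c$. There we apply Fermat's argument. Suppose $c$ is a maximum. Then the difference quotient $\frac{g(c+h) - g(c)}{h}$ is $\leq 0$ for $h > 0$ and $\geq 0$ for $h < 0$. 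Because $g$ is differentiable at $c$, the two one-sided limits coincide, which forces $g'(c) = 0$. A minimum is handled symmetrically.
\end{itemize}

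Substituting $g'(c) = 0$ into the formula for $g'$ yields $f'(c) = \frac{f(x_2) - f(x_1)}{x_2 - x_1}$, as claimed.

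The only step with any real content is the existence of an interior extremum, which rests on the extreme value theorem, that is, on the completeness of $\mathbb{R}$. We will take that theorem as standard, as the paper does by citing~\cite{thomas2014thomas}. The remaining steps are routine verifications.
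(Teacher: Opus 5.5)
Your proof is correct, and it is essentially the paper's route: the paper gives no argument of its own and simply cites Thomas' \emph{Calculus}, whose proof is exactly this reduction to Rolle's theorem by subtracting the secant line, with Rolle proved via the extreme value theorem and Fermat's interior-extremum argument. (Minor nit: with the word ``only,'' your first case is vacuous, since $g \equiv 0$ would also attain its extrema at interior points, but the case split is still exhaustive and the conclusion is unaffected.)
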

Let $\theta_\alpha = \pi/2$ and $\theta_{\alpha'} = \pi/2 - \beta$, where $0 < \beta \leq \pi/2$ is such that $|\sin^2 \theta_\alpha - \sin^2 \theta_{\alpha'}| = \frac{1}{n}$. This implies that 
\[
| \theta_\alpha - \theta_{\alpha'}| = \beta,
\]
and 
\begin{align*}
    |\sin^2 \theta_\alpha - \sin^2 \theta_{\alpha'}| &= 
    \left|\sin^2 \left( \frac{\pi}{2}\right) - \sin^2 \left( \frac{\pi}{2} - \beta\right)\right| \\
    & = \left| 1 - \cos^2 \beta\right| \\
    &= \sin^2 \beta
\end{align*}
This gives us 
\[
\sin^2\beta = \frac{1}{n} \Rightarrow \beta = \sin^{-1}\left( \frac{1}{\sqrt{n}} \right)
\]
In the following we show that any other interval $| \theta_\alpha - \theta_{\alpha'}|$ satisfying $|\sin^2 \theta_\alpha - \sin^2 \theta_{\alpha'}| = \frac{1}{n}$, implies $| \theta_\alpha - \theta_{\alpha'}| \leq \beta$. Hence this is the sensitivity of the angle over all neighboring datasets. To prove this, we start with a few lemmas.

\begin{lemma}
\label{lemma:ave-slope}
Let $f$ be a concave function on an interval $[x_1, x_2]$ and differentiable on $(x_1, x_2)$. Let $a, b, c, d \in [x_1, x_2]$ such that $a < b < c < d$. Then,
\[
\frac{f(c) - f(a)}{c - a} \geq \frac{f(d) - f(c)}{d - c}
\]
\end{lemma}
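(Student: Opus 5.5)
The plan is to chain two applications of the Mean Value Theorem (Theorem~\ref{theorem:mvt}) with the monotonicity of $f'$ that concavity provides. The hypothesis that $f$ is concave and differentiable on $(x_1,x_2)$ will be used in its derivative form: $f'$ is non-increasing on $(x_1,x_2)$. For the differentiable functions used in this paper, such as $\sin^2 x$ on $[\pi/4,\pi/2]$, this is exactly how concavity was justified above, through the decreasing derivative $\sin 2x$. If one insists on the chord definition of concavity instead, the argument can be run directly with the three-chord inequality, as described at the end.

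First, I apply Theorem~\ref{theorem:mvt} to $f$ on $[a,c]$. Continuity of $f$ on $[a,c]$ follows from differentiability at interior points together with concavity, which gives continuity on the open interval. The endpoint cases $a=x_1$ or $d=x_2$ I would handle by restricting to points where continuity is available, or by simply assuming continuity on the closed interval, as is implicit in the paper's use. This yields a point $\xi_1\in(a,c)$ with
\[
f'(\xi_1)=\frac{f(c)-f(a)}{c-a}.
\]
Second, I apply Theorem~\ref{theorem:mvt} on $[c,d]$ to obtain $\xi_2\in(c,d)$ with
\[
f'(\xi_2)=\frac{f(d)-f(c)}{d-c}.
\]
Since $\xi_1<c<\xi_2$ and $f'$ is non-increasing, we get $f'(\xi_1)\ge f'(\xi_2)$, which is the claimed inequality. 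The point $b$ plays no role; it is only there so the lemma can later be invoked with four ordered points. Note that the inequality actually holds with $a<c<d$ alone.

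The only step needing care is the passage from concavity to a non-increasing derivative, in case the reader takes the chord definition of concavity. I would first try to avoid this by proving the slope inequality directly from the chord definition. Write $c=\lambda a+(1-\lambda)d$, so that $f(c)\ge \lambda f(a)+(1-\lambda)f(d)$. Rearranging gives the standard three-slope inequality
\[
\frac{f(c)-f(a)}{c-a}\ \ge\ \frac{f(d)-f(a)}{d-a}\ \ge\ \frac{f(d)-f(c)}{d-c}.
\]
Its outer inequality is precisely the lemma, and this route needs no differentiability at all. I expect this rearrangement to be the main, albeit routine, obstacle, and I would present it as the primary proof, with the Mean Value Theorem argument as the version matching the paper's derivative-based notion of concavity.
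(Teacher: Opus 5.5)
Your argument is correct for the inequality as displayed, and it takes a more direct route than the paper.

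Write $s(u,v)=\frac{f(v)-f(u)}{v-u}$. You compare $s(a,c)$ with $s(c,d)$ in one of two ways. The first is the Mean Value Theorem on the adjacent intervals $[a,c]$ and $[c,d]$, so the two mean-value points are automatically ordered. The second is the three-slope inequality from the chord definition. The paper instead applies the Mean Value Theorem on the three pieces $[a,b]$, $[b,c]$, $[c,d]$. It gets $s(a,b)\ge s(b,c)\ge s(c,d)$ from the decreasing derivative, then writes $s(a,c)$ and $s(b,d)$ as convex combinations of these to sandwich the middle slope $s(b,c)$.

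In fact the paper's proof concludes $s(a,c)\ge s(b,d)$, a comparison of overlapping intervals. That is the form invoked later, in Lemma~\ref{lemma:outside-interval} when $\pi/4<x<\pi/2-x'<x+\pi/4-x'$, and in the overlapping case of the sensitivity theorem. So the displayed right-hand side is most likely a typo for $\frac{f(d)-f(b)}{d-b}$, which explains your observation that $b$ plays no role. The displayed inequality still follows from the paper's steps, since $s(a,c)\ge s(b,c)\ge s(c,d)$.

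As for what each route buys: yours is shorter. Your chord version also needs neither differentiability nor continuity at the endpoints. The paper's Mean Value Theorem step on $[a,b]$ silently assumes continuity at $a$ even when $a=x_1$, where a concave function may jump down.

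The paper's pivot through $s(b,c)$, on the other hand, is exactly what handles overlapping intervals. Your two-point Mean Value Theorem argument would not, since mean-value points in $(a,c)$ and $(b,d)$ need not be ordered. Your chord approach does extend: applying the three-slope inequality to $a<b<c$ and to $b<c<d$ gives $s(a,c)\ge s(b,c)\ge s(b,d)$.
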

\begin{proof}
Consider intervals $[a, b], [b, c]$ and $[c, d]$. By the mean value theorem (Theorem~\ref{theorem:mvt}), There exist $s_{a,b}$, $s_{b,c}$ and $s_{c,d}$ such that 
\[
s_{a,b} = \frac{f(b) - f(a)}{b - a}, s_{b,c} = \frac{f(c) - f(b)}{c - b}, s_{c,d} = \frac{f(d) - f(c)}{d - c},
\]
Since the function is concave on $[x_1, x_2]$, its derivative $f'$ is decreasing on this interval~\cite{thomas2014thomas}. Therefore since $s_{a,b}$, $s_{b,c}$ and $s_{c,d}$ are evaluated on points in $(a, b)$, $(b, c)$ and $(c, d)$ respectively, it follows that
\[
s_{a,b} \geq s_{b,c} \geq s_{c,d}
\]
Now observe that 
\begin{align*}
   \frac{f(c) - f(a)}{c - a} &= \frac{c-b}{c-a} \frac{f(c) - f(b)}{c - b} + \frac{b-a}{c-a} \frac{f(b) - f(a)}{b - a} \\
   &= \frac{c-b}{c-a} s_{b,c} + \frac{b-a}{c-a} s_{a,b} \\
   &\geq \frac{c-b}{c-a} s_{b,c}  +  \frac{b-a}{c-a}s_{b,c} \\
   &= s_{b,c}
\end{align*}
Similarly,
\begin{align*}
\frac{f(d) - f(b)}{d - b} &= \frac{d-c}{d-b} \frac{f(d) - f(c)}{d - c} + \frac{c-b}{d-b} \frac{f(c) - f(b)}{c - b} \\
&= \frac{d-c}{d-b} s_{c,d} + \frac{c-b}{d-b} s_{b,c} \\
&\leq \frac{d-c}{d-b} s_{b,c} + \frac{c-b}{d-b} s_{b,c} \\
&=  s_{b,c}
\end{align*}
Combining the two we get 
\[
\frac{f(c) - f(a)}{c - a} \geq \frac{f(d) - f(b)}{d - b},
\]
as desired.
\end{proof}

\begin{lemma}
\label{lemma:ave-slope-2}
Let $f$ be a concave function on an interval $[x_1, x_2]$ and differentiable on $(x_1, x_2)$. Let $a, b, c, d \in [x_1, x_2]$ such that $a < b \leq c < d$. Then,
\[
\frac{f(b) - f(a)}{b - a} \geq \frac{f(d) - f(c)}{d - c}
\]
\end{lemma}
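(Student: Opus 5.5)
Since $a<b\le c<d$, the intervals $[a,b]$ and $[c,d]$ share at most the endpoint $b=c$. The plan is to compare each secant slope with a derivative value at a separating point, using the mean value theorem and the fact that $f'$ is decreasing for a concave differentiable function.

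First apply the mean value theorem (Theorem~\ref{theorem:mvt}) on $[a,b]$ and on $[c,d]$. This gives points $s\in(a,b)$ and $u\in(c,d)$ with
\[
f'(s)=\frac{f(b)-f(a)}{b-a}, \qquad f'(u)=\frac{f(d)-f(c)}{d-c}.
\]
Continuity of $f$ on $[a,b]$ and $[c,d]$ follows from concavity on the interior together with the standing assumptions, exactly as in the proof of Lemma~\ref{lemma:ave-slope}. Both $s$ and $u$ lie in the open interval $(x_1,x_2)$, where $f$ is differentiable.

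Next, since $s<b\le c<u$, we have $s<u$ strictly. As in the proof of Lemma~\ref{lemma:ave-slope}, concavity of $f$ on $[x_1,x_2]$ means $f'$ is non-increasing on $(x_1,x_2)$, so $f'(s)\ge f'(u)$. This is precisely the claimed inequality between the two secant slopes.

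The only delicate point is the degenerate case $b=c$. There the two intervals touch, but the mean value points still satisfy $s<b=c<u$ because they are interior to their intervals, so no separate treatment is needed. I expect no real obstacle beyond justifying that $f'$ is monotone, which the paper already takes from~\cite{thomas2014thomas}. Alternatively, when $b<c$, one can deduce the result from Lemma~\ref{lemma:ave-slope}: comparing $[a,b]$ with $[a,c]$ and $[b,c]$ via the weighted-average decomposition used there gives slope$[a,b]\ge$ slope$[b,c]\ge$ slope$[c,d]$. The direct mean value theorem argument is shorter, however.
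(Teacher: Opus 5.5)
Your argument is the same as the paper's: apply the mean value theorem on $[a,b]$ and $[c,d]$, then use that $f'$ is non-increasing for a differentiable concave function. Your observation that $s<b\le c<u$ makes the case $b=c$ explicit, which the paper leaves implicit. One small correction: continuity at the endpoints $x_1,x_2$ does not follow from concavity, since a concave function may jump down at an endpoint, and it is not established in the proof of Lemma~\ref{lemma:ave-slope} either; your mean value step therefore assumes it tacitly, exactly as the paper does, which is harmless for $f(x)=\sin^2 x$ and in general costs nothing because such a jump only strengthens the inequality.
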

\begin{proof}
Consider the intervals $[a, b]$ and $[c, d]$. By the mean value theorem (Theorem~\ref{theorem:mvt}), there exist $s_{a,b}$ and $s_{c,d}$ such that 
\[
s_{a,b} = \frac{f(b) - f(a)}{b - a}, s_{c,d} = \frac{f(d) - f(c)}{d - c},
\]
Since the function is concave on $[x_1, x_2]$, its derivative $f'$ is decreasing on this interval~\cite{thomas2014thomas}. Therefore since $s_{a,b}$ and $s_{c,d}$ are evaluated on points in $(a, b)$ and $(c, d)$ respectively, it follows that
\[
s_{a,b} \geq s_{c,d}
\]
as desired.
\end{proof}

\begin{lemma}
\label{lemma:pi4-interval}
Let $f(x) = \sin^2 x$, where $x \in [0, \pi/2]$. Let $x, x' \in [0, \pi/2]$, with $x' < x$ be such that $\sin^2 x - \sin^2 x' \leq 1/2$. Then $x - x ' \leq \pi/4$. 
\end{lemma}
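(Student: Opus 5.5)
Since $\sin^2$ is increasing on $[0,\pi/2]$, we have $\sin^2 x - \sin^2 x' \ge 0$. The plan is to argue by contradiction, or equivalently to minimize $\sin^2 x - \sin^2 x'$ over all pairs with a fixed gap. So suppose $x - x' = \gamma > \pi/4$, and show that $\sin^2 x - \sin^2 x' > 1/2$.

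The cleanest route is a trigonometric identity rather than the concavity lemmas. We have
\[
\sin^2 x - \sin^2 x' = \sin(x+x')\sin(x-x').
\]
This follows from $\sin^2 x - \sin^2 x' = \tfrac{1}{2}(\cos 2x' - \cos 2x)$ together with the sum-to-product formula. Write $s = x + x'$ and $\gamma = x - x'$. The constraints $0 \le x' < x \le \pi/2$ become $\gamma \le s \le \pi - \gamma$. On this interval $\sin s$ is minimized at an endpoint, so $\sin s \ge \sin\gamma$, with equality exactly at $x'=0$ or $x=\pi/2$. This matches the boundary configuration discussed before Figure~\ref{fig:sin2x-max}. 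Hence
\[
\sin^2 x - \sin^2 x' \ge \sin^2 \gamma.
\]

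If $\gamma > \pi/4$, then, since $\gamma \le \pi/2$ and $\sin$ is increasing on $[0,\pi/2]$, we get $\sin^2\gamma > \sin^2(\pi/4) = 1/2$. This contradicts the hypothesis, so $x - x' \le \pi/4$.

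The only step needing care is the claim that $\sin s \ge \sin\gamma$ on $[\gamma, \pi-\gamma]$. It holds because $\sin$ is symmetric about $\pi/2$ and concave on $[0,\pi]$, so its minimum over a symmetric interval around $\pi/2$ is attained at the endpoints. If one prefers to stay within the paper's toolkit, the same monotonicity can instead be phrased via Lemma~\ref{lemma:ave-slope-2}, applied separately to the cases where the interval lies in $[0,\pi/4]$, in $[\pi/4,\pi/2]$, or straddles $\pi/4$. That version is more cumbersome, so I would present the identity-based argument. In fact the identity immediately yields the more general bound $|x - x'| \le \sin^{-1}\sqrt{|\sin^2 x - \sin^2 x'|}$, which presumably is what the paper's subsequent sensitivity result needs.
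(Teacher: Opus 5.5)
Your proof is correct and is essentially the paper's own argument: the identity $\sin^2 x - \sin^2 x' = \sin(x+x')\sin(x-x')$, then $x - x' \le x + x' \le \pi - (x - x')$ forcing $\sin(x+x') \ge \sin(x-x')$, hence $\sin^2(x-x') \le 1/2$ and $x - x' \le \pi/4$. Your closing remark is also right and worth keeping. With $|\sin^2 x - \sin^2 x'| = 1/n$, the same inequality gives the angle-sensitivity bound of Corollary~\ref{cor:sensitivity-angle} directly, even without the assumption $n \ge 2$, bypassing the concavity case analysis of Lemmas~\ref{lemma:ave-slope}, \ref{lemma:ave-slope-2} and~\ref{lemma:outside-interval} that the paper uses for it.
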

\begin{proof}
We use the identity,
\[
\sin^2 x - \sin^2 x' = \sin(x + x') \sin(x-x')
\]
Clearly, $x + x' \geq x - x'$. Since $x \in [0, \pi/2]$, it follows that $x + x' \leq \pi -(x - x')$. Thus,
\[
x - x ' \leq x + x ' \leq \pi - (x - x')
\]
It follows that $\sin(x + x') \geq \sin(x - x')$, since the minimum of the sine function is at the end points $x - x'$ or $\pi - (x - x')$, both of which equal $\sin(x - x')$. Thus,
\begin{align*}
    \sin^2 x - \sin^2 x' &= \sin(x + x') \sin(x-x') \\
    &\geq \sin(x - x') \sin(x-x')\\
    &\geq \sin^2(x - x')
\end{align*}
This means
\begin{align*}
    \sin^2(x - x') &\leq \frac{1}{2} \\
    \Rightarrow \sin(x - x') &\leq \frac{1}{\sqrt{2}} \\
    \Rightarrow x - x' &\leq \sin^{-1}\left(\frac{1}{\sqrt{2}} \right) = \frac{\pi}{4}
\end{align*}
\end{proof}

\begin{lemma}
\label{lemma:outside-interval}
Let $x \in [\pi/4, \pi/2]$ and $x' \in [0, \pi/4]$ such that $x - x ' \leq \pi/4$. Then 
\[
f(x) - f(x') \geq f(x + \pi/4 - x') - f(\pi/4)
\]
\end{lemma}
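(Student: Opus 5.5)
The plan is to avoid the concavity lemmas here and argue directly with the product identity already used in the proof of Lemma~\ref{lemma:pi4-interval}:
\[
\sin^2 a - \sin^2 b = \sin(a+b)\sin(a-b).
\]
Set $y := x + \pi/4 - x'$. I will first check that $y$ is a legal point. Since $x' \leq \pi/4$, we have $y \geq x \geq \pi/4$. Since $x - x' \leq \pi/4$, we have $y \leq \pi/2$. So $[\pi/4, y]$ is an interval in the concave region of $f$, and it has the same length $x - x'$ as $[x', x]$. With this notation the right-hand side of the claim is $f(y) - f(\pi/4)$.

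Next I apply the identity to both sides. For the left-hand side,
\[
f(x) - f(x') = \sin(x+x')\sin(x-x').
\]
For the right-hand side, $y - \pi/4 = x - x'$ and $y + \pi/4 = \pi/2 + (x - x')$, so
\[
f(y) - f(\pi/4) = \sin\left(\tfrac{\pi}{2} + (x-x')\right)\sin(x-x').
\]
The common factor $\sin(x-x')$ is nonnegative because $0 \leq x - x' \leq \pi/4$. It therefore suffices to show
\[
\sin(x+x') \geq \sin\left(\tfrac{\pi}{2} + (x-x')\right).
\]

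This comparison is the only real step. I will use that on $[0,\pi]$ the sine function is symmetric about $\pi/2$ and decreases in the distance $|u - \pi/2|$. Both arguments lie in $[0,\pi]$:
\begin{itemize}
\item $x + x' \in [\pi/4, 3\pi/4]$, where the upper bound uses $x \leq \pi/2$ and $x' \leq \pi/4$;
\item $\pi/2 + (x - x') \in [\pi/2, 3\pi/4]$.
\end{itemize}
So the claim reduces to showing $|x + x' - \pi/2| \leq x - x'$. This splits into two linear inequalities:
\begin{itemize}
\item $x + x' - \pi/2 \leq x - x'$ is equivalent to $x' \leq \pi/4$;
\item $\pi/2 - x - x' \leq x - x'$ is equivalent to $x \geq \pi/4$.
\end{itemize}
Both are exactly the hypotheses, which completes the argument. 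The hypothesis $x - x' \leq \pi/4$ is needed only to ensure $y \leq \pi/2$, so that the right-hand side refers to a point of $[0,\pi/2]$.

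I expect the main obstacle to be bookkeeping rather than analysis: confirming that every argument of sine stays in a range where the monotonicity-about-$\pi/2$ argument applies. If one prefers to stay within the concavity framework of Lemmas~\ref{lemma:ave-slope} and~\ref{lemma:ave-slope-2}, an alternative is to split $[x', x]$ at $\pi/4$. One would then compare the part of the interval in $[\pi/4, \pi/2]$ with a translate, and use the symmetry $f(\pi/2 - u) = 1 - f(u)$ to reflect the part in $[0, \pi/4]$. The identity route is shorter and is the one I will follow.
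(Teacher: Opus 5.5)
Your proof is correct, but it takes a different route from the paper's.

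The paper stays within its concavity framework. It writes $f(x)-f(x') = [f(x)-f(\pi/4)] + [f(\pi/4)-f(x')]$ and uses the symmetry of $\sin^2$ about $\pi/4$ to replace the second bracket by $f(\pi/2-x')-f(\pi/4)$. It then compares the interval $[\pi/4,\pi/2-x']$ with its translate $[x,x+\pi/4-x']$, both of length $\pi/4-x'$. This comparison uses the mean-value slope lemmas (Lemma~\ref{lemma:ave-slope} or Lemma~\ref{lemma:ave-slope-2}, depending on whether $\pi/2-x'\le x$). That is close to the alternative you sketch at the end.

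You instead evaluate both sides exactly with $\sin^2 a-\sin^2 b=\sin(a+b)\sin(a-b)$ and factor out $\sin(x-x')\ge 0$. You then compare $\sin(x+x')$ with $\sin(\pi/2+(x-x'))=\cos(x-x')$ by their distance from $\pi/2$. Equivalently, the difference of the two sides is exactly $\sin(x-x')(\sin x-\cos x)(\cos x'-\sin x')$, whose sign follows at once from $x\ge\pi/4\ge x'$.

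Your route has several advantages:
\begin{itemize}
\item It is shorter and self-contained.
\item It needs no case split on the relative position of $\pi/2-x'$ and $x$.
\item It covers the boundary cases $x=\pi/4$ and $x'=\pi/4$ automatically; the paper's case split is written with strict inequalities and does not address them.
\item As you correctly note, it shows that $x-x'\le\pi/4$ is not needed for the inequality itself. It is needed only to keep $x+\pi/4-x'$ in $[\pi/4,\pi/2]$ for the later use in the sensitivity theorem.
\end{itemize}
The paper's route, in exchange, uses only concavity on $[\pi/4,\pi/2]$ and the reflection symmetry $f(\pi/2-u)=1-f(u)$. It would therefore carry over unchanged to any function with that shape, whereas yours relies on an identity specific to $\sin^2$.
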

\begin{proof}
Consider the intervals $[\pi/4, \pi/2 - x']$ and $[x, x + \pi/4 - x']$. See for example Figure~\ref{fig:sin2x}. We either have 
\[
\pi/4 < \pi/2 - x' \leq x < x + \pi/4 - x'
\]
or 
\[
\pi/4 < x < \pi/2 - x' < x + \pi/4 - x'
\]
Depending on the case, we can apply either Lemma~\ref{lemma:ave-slope} or Lemma~\ref{lemma:ave-slope-2} to obtain
\begin{align*}
  \frac{f(\pi/2 - x') - f(\pi/4)}{\pi/2 - x' - \pi/4} &\geq \frac{f(x + \pi/4 - x') - f(x)}{x + \pi/4 - x' - x} \\  
\Rightarrow \frac{f(\pi/2 - x') - f(\pi/4)}{\pi/4 - x'} &\geq \frac{f(x + \pi/4 - x') - f(x)}{\pi/4 - x'} \\ 
\Rightarrow f(\pi/2 - x') - f(\pi/4) &\geq f(x + \pi/4 - x') - f(x)
\end{align*}
Now
\begin{align*}
    f(x) - f(x') &= f(x) + f(\pi/4) - f(\pi/4) - f(x') \\
    &= f(x) - f(\pi/4) + f(\pi/4) - f(x') 
\end{align*}
Now, by the symmetry of the $\sin^2x$ function around $\pi/4$ (see Figure~\ref{fig:sin2x}),
\[
 f(\pi/4) - f(x') = f(\pi/4 + \pi/4 - x') - f(\pi/4) = f(\pi/2 - x') - f(\pi/4).
\]
Therefore,
\begin{align*}
    f(x) - f(x') &= f(x) - f(\pi/4) + f(\pi/2 - x') - f(\pi/4) \\
    & \geq f(x) - f(\pi/4) + f(x + \pi/4 - x') - f(x) \\
    &= f(x + \pi/4 - x') - f(\pi/4)
\end{align*}
\end{proof}

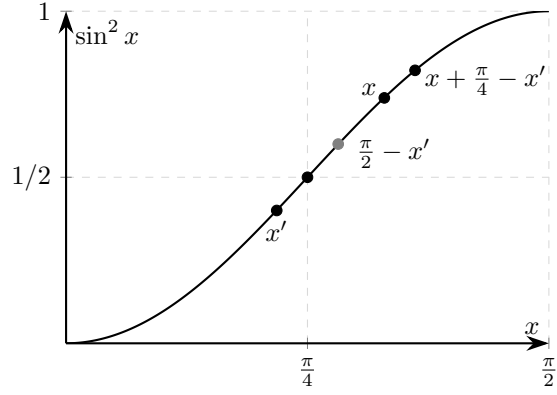
\begin{figure}
    \centering
\begin{tikzpicture}
  \begin{axis}[
    domain=0:pi/2,
    samples=200,
    xmin=0, xmax=1.57079632679,   % pi/2
    ymin=0, ymax=1,
    axis lines=middle,
    xlabel={$x$},
    ylabel={$\sin^2 x$},
    xtick={0, pi/4, pi/2},
    xticklabels={$0$, $\tfrac{\pi}{4}$, $\tfrac{\pi}{2}$},
    ytick={0,1/2,1},
    yticklabels={0, 1/2, 1},
    grid=both,
    grid style={dashed,gray!30},
    width=8cm,
    height=6cm,
    enlargelimits=false,
    clip=true,
    % Bigger arrow tips
    axis line style={thick, ->, >={Stealth[, length=3mm, width=2mm]}},
  ]
    % Use deg(x) so pgf math gets the argument in degrees (safe conversion)
    \addplot [thick,domain=0:pi/2] {sin(deg(x))^2};
    % Optional: mark endpoints
    \addplot+[only marks, mark=*, mark options={fill=black,draw=black}] coordinates {(pi/4 - 0.1, 1/2 - 0.1) (pi/4,1/2) (pi/4 + 0.25, 0.739) (pi/4 + 0.35, 0.822)} ;
    \addplot+[only marks,mark=*, mark options={fill=gray, draw=gray}] coordinates {(pi/4 + 0.1, 1/2 + 0.1)} ;
    %\addplot+[only marks,mark=*, fill=black, draw=black] coordinates {(pi/4 - 0.1, 1/2 - 0.1) (pi/4,1/2) (pi/4 + 0.25, 0.739) (pi/4 + 0.35, 0.822)} ;

%    \addplot[gray,dashed] coordinates {(pi/4,0) (pi/4,0.5)};
    
    \node at (axis cs:pi/4-0.1,0.28) [anchor=south] {$x'$};
    \node at (axis cs:pi/4+0.28,0.50) [anchor=south] {$\frac{\pi}{2} - x'$};
    \node at (axis cs:pi/4+0.2,0.72) [anchor=south] {$x$};
    \node at (axis cs:pi/4+0.58,0.72) [anchor=south] {$x + \frac{\pi}{4} - x'$};
    
  \end{axis}
\end{tikzpicture}    
    \caption{The function $f(x) = \sin^2 x$ over the interval $[0, \pi/2]$. The function is concave on the interval $[\pi/4, \pi/2]$. Further note that $\sin^2 (\pi/2 - x') - \sin^2(\pi/4) = \sin^2{\pi/4} - \sin^2{x'}$.}
    \label{fig:sin2x}
\end{figure}

\begin{theorem}
Let $n \geq 2$ be a positive integer. Let $\theta_\alpha, \theta_{\alpha'} \in [0, \pi/2]$. Let $0 < \beta \leq \pi/2$ be such that $\sin^2 \beta = \frac{1}{n}$. Then if 
\[
|\sin^2 \theta_\alpha - \sin^2 \theta_{\alpha'}| = \frac{1}{n}
\]
then 
\[
|\theta_\alpha - \theta_{\alpha'}| \leq \beta
\]
\end{theorem}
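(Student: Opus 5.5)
Without loss of generality I assume $\theta_{\alpha'} < \theta_\alpha$; the case of equality is excluded because the difference of the squared sines is $1/n > 0$. I write $x = \theta_\alpha$ and $x' = \theta_{\alpha'}$, so that $\sin^2 x - \sin^2 x' = 1/n$. Rather than chaining Lemmas~\ref{lemma:ave-slope}--\ref{lemma:outside-interval} through a case split on where $x, x'$ sit relative to $\pi/4$, I plan to reuse the core inequality already established inside the proof of Lemma~\ref{lemma:pi4-interval}. For $0 \le x' < x \le \pi/2$ that proof shows
\[
\sin^2 x - \sin^2 x' = \sin(x+x')\sin(x-x') \geq \sin^2(x-x').
\]
I will re-verify this step. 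The product-to-sum identity gives the equality. Since $x - x' \le x + x' \le \pi - (x-x')$, and sine is concave and nonnegative on $[0,\pi]$, its minimum over the interval $[x-x', \pi-(x-x')]$ is attained at an endpoint, where it equals $\sin(x-x') \geq 0$. Nonnegativity of $\sin(x-x')$ lets me multiply the inequality $\sin(x+x') \ge \sin(x-x')$ through by it.

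With that inequality in hand, the hypothesis gives $\sin^2(x - x') \le 1/n = \sin^2\beta$. Both $x - x'$ and $\beta$ lie in $[0, \pi/2]$, where $\sin^2$ is strictly increasing. Hence $x - x' \le \beta = \sin^{-1}(1/\sqrt{n})$, which is exactly $|\theta_\alpha - \theta_{\alpha'}| \le \beta$. The condition $n \ge 2$ is only needed to ensure that $\beta$ is well defined in $(0, \pi/4]$; in fact any $n \geq 1$ works.

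For tightness, which the paper wants for the sensitivity claim, I would note that the extremal pair $\theta_\alpha = \pi/2$, $\theta_{\alpha'} = \pi/2 - \beta$ computed before the lemmas attains equality. This is consistent with equality in $\sin(x+x') \ge \sin(x-x')$ holding when $x + x' = \pi - (x - x')$, i.e.\ $x = \pi/2$, or symmetrically when $x' = 0$.

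The only delicate step is the endpoint-minimum argument for $\sin$ on $[x-x', \pi-(x-x')]$. I expect no real obstacle there beyond noting that the interval is symmetric about $\pi/2$ and lies inside $[0,\pi]$. If a referee preferred the slope-based route suggested by Figure~\ref{fig:sin2x-max}, I would fall back on the following argument. First, Lemma~\ref{lemma:pi4-interval} gives $x - x' \le \pi/4$, since $1/n \le 1/2$. Next, Lemma~\ref{lemma:outside-interval} together with the symmetry of $\sin^2$ about $\pi/4$ shifts any straddling interval into $[\pi/4, \pi/2]$ without shrinking its length for a given rise. Finally, Lemma~\ref{lemma:ave-slope-2} shows that the interval of a given rise is longest when pushed to the right endpoint $\pi/2$, where its length is $\beta$. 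The direct inequality above makes that longer path unnecessary.
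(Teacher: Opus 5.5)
Your proof is correct, and it takes a genuinely different and much shorter route than the paper.

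The paper argues by cases on where $\theta_\alpha,\theta_{\alpha'}$ sit relative to $\pi/4$:
\begin{itemize}
\item When both lie in $[\pi/4,\pi/2]$, it compares average slopes of the concave function $\sin^2$ on $[\theta_{\alpha'},\theta_\alpha]$ and on $[\pi/2-\beta,\pi/2]$, using Lemmas~\ref{lemma:ave-slope} and~\ref{lemma:ave-slope-2}.
\item When both lie in $[0,\pi/4]$, it reflects through $x\mapsto\pi/2-x$.
\item In the straddling case, it first uses Lemma~\ref{lemma:pi4-interval} to get $\theta_\alpha-\theta_{\alpha'}\le\pi/4$; this is where $n\ge 2$ enters. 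It then uses Lemma~\ref{lemma:outside-interval} to slide the interval into $[\pi/4,\pi/2]$ without increasing its rise, and repeats the slope comparison.
\end{itemize}

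You instead notice that the inequality $\sin^2 x-\sin^2 x'=\sin(x+x')\sin(x-x')\ge\sin^2(x-x')$ already closes the argument. The paper derives this inside Lemma~\ref{lemma:pi4-interval} but only applies it with threshold $1/2$; you apply it with threshold $1/n$. The three concavity lemmas and all case distinctions then become unnecessary. Your justification of the endpoint-minimum step, via concavity of $\sin$ on $[0,\pi]$, and of the final step, via strict monotonicity of $\sin^2$ on $[0,\pi/2]$, is complete.

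Your route also gives more than the stated theorem:
\begin{itemize}
\item It works for every $n\ge 1$.
\item It holds under the weaker hypothesis $|\sin^2\theta_\alpha-\sin^2\theta_{\alpha'}|\le 1/n$, which is the form actually wanted for a sensitivity statement such as Corollary~\ref{cor:sensitivity-angle}.
\item It makes the equality cases transparent: $x=\pi/2$ or $x'=0$, together with $x-x'=\beta$.
\end{itemize}

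What the paper's approach offers in exchange is the geometric picture of Figure~\ref{fig:sin2x-max}: for a fixed rise, the angular gap is largest where $\sin^2$ is flattest. Its average-slope lemmas would also transfer to other concave link functions. The cost is bookkeeping over several cases and boundary configurations, such as intervals sharing or nesting endpoints, which your identity-based argument avoids entirely. Your fallback sketch is essentially the paper's proof.
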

\begin{proof}
Let $f(x) = \sin^2 x$. We first assume that $\theta_\alpha, \theta_{\alpha'} \in [\pi/4, \pi/2]$.
Without loss of generality, assume that $\theta_\alpha > \theta_{\alpha'}$. Set $\theta_\alpha = \frac{\pi}{2}$ and $\theta_{\alpha'} = \frac{\pi}{2} - \beta$. Then $|\theta_\alpha - \theta_{\alpha'}| = \beta$, and 
\begin{align*}
    |\sin^2 \theta_\alpha - \sin^2 \theta_{\alpha'}| &= 
    \left|\sin^2 \left( \frac{\pi}{2}\right) - \sin^2 \left( \frac{\pi}{2} - \beta\right)\right| \\
    & = \left| 1 - \cos^2 \beta\right| \\
    &= \sin^2 \beta = \frac{1}{n}
\end{align*}

Consider any other interval $[\theta_{\alpha'},  \theta_{\alpha}]$. This interval cannot be contained in $[\pi/2 - \beta, \pi/2]$, because otherwise $|\sin^2 \theta_\alpha - \sin^2 \theta_{\alpha'}| < 1/n$. So consider first the case 
\[
\theta_\alpha < \theta_{\alpha'} \leq \pi/2 -\beta < \pi/2
\]
Since $\sin^2 x$ is a concave on the interval $[\pi/4, \pi/2]$ and differentiable on $(\pi/4, \pi/2)$, by Lemma~\ref{lemma:ave-slope-2} 
\begin{align*}
    \frac{f(\theta_\alpha) - f(\theta_{\alpha'})}{\theta_\alpha - \theta_{\alpha'}} &\geq \frac{f(\pi/2) - f(\pi/2 - \beta)}{\beta}  \\
\Rightarrow \frac{1}{n} \frac{1}{\theta_{\alpha} - \theta_{\alpha'}} &\geq \frac{1}{n} \frac{1}{\beta}\\
\Rightarrow \theta_{\alpha} - \theta_{\alpha'} &\leq \beta
\end{align*}
as required. Next consider the case 
\[
\theta_\alpha < \pi/2 < \theta_{\alpha'} < \pi/2 - \beta
\]
This time from Lemma~\ref{lemma:ave-slope}, we have
\begin{align*}
\frac{f(\theta_{\alpha}) - f(\theta_{\alpha'})}{\theta_{\alpha} - \theta_{\alpha'}} &\geq \frac{f(\pi/2) - f(\pi/2 - \beta)}{\beta} \\
\Rightarrow \theta_{\alpha} - \theta_{\alpha'} &\leq \beta
\end{align*}

Next consider $\theta_\alpha, \theta_{\alpha'} \in [0, \pi/4]$. Note that
\[
\sin^2 \left( \frac{\pi}{2} - x\right) = \cos^2 x\\
= 1 - \sin^2 x
\]
This means that if $\theta_\alpha, \theta_{\alpha'} \in [0, \pi/4]$ and without loss of generality, assuming $\theta_\alpha < \theta_{\alpha'}$ we get
\[
\sin^2 \left( \frac{\pi}{2} - \theta_\alpha\right) - \sin^2 \left( \frac{\pi}{2} - \theta_{\alpha'}\right)  =  1-  \sin^2 \theta_\alpha - 1 + \sin^2 \theta_{\alpha'}  =  \sin^2 \theta_{\alpha'} - \sin^2 \theta_{\alpha}
\]
Thus, the above analysis holds in this case as well by working with the angles $\frac{\pi}{2} - \theta_\alpha$ and $\frac{\pi}{2} - \theta_{\alpha'}$. 

The only case left is when one of the angles is in $[0, \pi/4]$ and the other in $[\pi/4, \pi/2]$. Again, without loss of generality, assume that $\theta_{\alpha'} \in [0, \pi/4]$ and $\theta_{\alpha} \in [\pi/4, \pi/2]$. By assumption $\sin^2 \theta_\alpha - \sin^2 \theta_{\alpha'} = \frac{1}{n}$. Since $n \geq 2$, $\sin^2 \theta_\alpha - \sin^2 \theta_{\alpha'}  \leq 1/2$, and from Lemma~\ref{lemma:pi4-interval} this implies that $\theta_{\alpha} -  \theta_{\alpha'} \leq \pi/4$. We can thus apply Lemma~\ref{lemma:outside-interval} to get
\[
\frac{f(\theta_\alpha) - f(\theta_{\alpha'})}{\theta_\alpha - \theta_{\alpha'}} \geq \frac{f(\theta_\alpha + \pi/4 - \theta_{\alpha'}) - f(\pi/4)}{\theta_\alpha - \theta_{\alpha'}}
\]
Depending on the case, either Lemma~\ref{lemma:ave-slope} or Lemma~\ref{lemma:ave-slope-2} implies
\begin{align*}
    \frac{f(\theta_\alpha) - f(\theta_{\alpha'})}{\theta_\alpha - \theta_{\alpha'}} &\geq \frac{f(\theta_\alpha + \pi/4 - \theta_{\alpha'}) - f(\pi/4)}{\theta_\alpha - \theta_{\alpha'}} \\
    &\geq \frac{f(\pi/2) - f(\pi/2 - \beta)}{\beta} \\
\Rightarrow  \frac{f(\theta_\alpha) - f(\theta_{\alpha'})}{\theta_\alpha - \theta_{\alpha'}} &\geq \frac{f(\pi/2) - f(\pi/2 - \beta)}{\beta} \\
\Rightarrow \theta_{\alpha} - \theta_{\alpha'} &\leq \beta
\end{align*}

This covers all cases except when $\theta_\alpha = \theta_{\alpha'}$. But in that case $|\sin^2 \theta_\alpha - \sin^2 \theta_{\alpha'}| = 0$. 
\end{proof}

Summarizing, we have

\begin{corollary}
\label{cor:sensitivity-angle}
Let $q$ be a counting query. Over all neighboring datasets $D$ and $D'$ with $q(D) = \alpha = \sin^2 \theta_\alpha$ and $q(D') = \alpha' = \sin^2 \theta_{\alpha'}$, where $0 \leq \theta_\alpha, \theta_{\alpha'} \leq \pi/2$, we have 
\[
|\theta_\alpha - \theta_{\alpha'}| \leq \sin^{-1}\left( \frac{1}{\sqrt{n}} \right)
\]
\qed
\end{corollary}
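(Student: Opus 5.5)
The corollary follows from the preceding theorem once we check that its hypotheses are met for every pair of neighbouring datasets. The main subtlety is that the theorem assumes $|\sin^2\theta_\alpha - \sin^2\theta_{\alpha'}| = 1/n$ exactly, whereas neighbouring datasets give only $|\alpha - \alpha'| \leq 1/n$. Since $q$ is a counting query, changing a single row changes $q(D)$ by $0$ or $1/n$, so there are two cases to treat.

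\emph{Case $|\alpha-\alpha'| = 0$.} The map $x \mapsto \sin^2 x$ is strictly increasing on $[0,\pi/2]$, since its derivative $\sin 2x$ is positive on $(0,\pi/2)$. Hence $\alpha = \alpha'$ forces $\theta_\alpha = \theta_{\alpha'}$, and the bound holds trivially.

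\emph{Case $|\alpha-\alpha'| = 1/n$ with $n \geq 2$.} Here we invoke the preceding theorem directly, with $\beta = \sin^{-1}(1/\sqrt{n}) \in (0,\pi/4]$. This $\beta$ satisfies $\sin^2\beta = 1/n$, which is exactly the theorem's hypothesis.

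\emph{Case $n = 1$.} The claimed bound is $\sin^{-1}(1) = \pi/2$. Both angles lie in $[0,\pi/2]$, so the bound is automatic.

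I would also record a more direct route that absorbs the inequality case without case analysis on $[\pi/4,\pi/2]$. Assume without loss of generality $\theta_\alpha \geq \theta_{\alpha'}$. The argument of Lemma~\ref{lemma:pi4-interval} starts from the identity $\sin^2 x - \sin^2 x' = \sin(x+x')\sin(x-x')$ and uses $x - x' \leq x + x' \leq \pi - (x-x')$. It gives
\[
\sin^2(\theta_\alpha - \theta_{\alpha'}) \leq \sin^2\theta_\alpha - \sin^2\theta_{\alpha'} \leq \frac{1}{n}.
\]
Since $\theta_\alpha - \theta_{\alpha'} \in [0,\pi/2]$, where $\sin$ is increasing and nonnegative, taking square roots and applying $\sin^{-1}$ yields $\theta_\alpha - \theta_{\alpha'} \leq \sin^{-1}(1/\sqrt{n})$.

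The only step needing care is justifying $\sin(x+x') \geq \sin(x-x')$. This uses that $\sin$ is concave on $[0,\pi]$, so its minimum over the interval $[x-x', \pi-(x-x')]$ is attained at an endpoint, and both endpoint values equal $\sin(x-x')$. I expect no real obstacle beyond the $|\alpha-\alpha'|\in\{0,1/n\}$ bookkeeping.
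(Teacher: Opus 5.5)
Your proposal is correct. Your first route is the paper's own, since the paper presents the corollary simply as a summary of the preceding theorem. You are in fact more careful than the paper here. The paper asserts $|\alpha-\alpha'|=1/n$ for every neighbouring pair, although replacing a row can leave $q$ unchanged, and the theorem assumes $n\ge 2$. Your monotonicity argument for $\alpha=\alpha'$ and your $n=1$ remark close both of these small gaps.

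Your second route is genuinely different from how the paper proves the underlying theorem, and it is the stronger argument. The paper works through concavity of $\sin^2$ on $[\pi/4,\pi/2]$. It uses mean-value slope comparisons (Lemmas~\ref{lemma:ave-slope} and~\ref{lemma:ave-slope-2}), reflection about $\pi/4$, and Lemma~\ref{lemma:outside-interval}, with a three-way case split according to where the angles lie relative to $\pi/4$. There, Lemma~\ref{lemma:pi4-interval} serves only to secure $\theta_\alpha-\theta_{\alpha'}\le\pi/4$ in the straddling case.

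You notice that the inequality $\sin^2(x-x')\le\sin^2x-\sin^2x'$, proved inside that lemma, already gives the full bound once $1/2$ is replaced by $1/n$. This yields a few-line, case-free proof covering every $n\ge1$ and every $|\alpha-\alpha'|\le 1/n$. It also does not depend on the paper's case analysis, whose write-up has slips: for example, the displayed statement of Lemma~\ref{lemma:ave-slope} differs from the inequality its proof derives, which is the one the theorem actually uses. The only implicit step in your argument is $\sin(x-x')\ge 0$ when you multiply through, and this holds since $x-x'\in[0,\pi/2]$.

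What the paper's geometric route conveys is where the extremum sits: on the flattest stretch of the curve, at an endpoint of $[0,\pi/2]$. Your identity recovers this as well. Equality in $\sin(x+x')\ge\sin(x-x')$ occurs only when $x'=0$ or $x=\pi/2$, so the bound can be attained only there, and it is attained, e.g., at $\alpha=1$, $\alpha'=1-1/n$.
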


\subsubsection{Making the Mechanism Differentially Private}

Our first result is a straightforward application of differential privacy to the outcome $y$ from the amplitude estimation algorithm. 

\begin{corollary}
\label{cor:lap-amp-est}
Let 
\[
M < \frac{\pi}{\sin^{-1}(1 /\sqrt{n})}
\]
and let $y$ be the output of the amplitude estimation algorithm. Then adding Laplace noise of scale $\pi/M$ to the outcome $\pi y/M$ is $\epsilon$-DP.
\end{corollary}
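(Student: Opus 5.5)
Here is how I would reduce the claim to the Laplace mechanism. I treat the released value $\pi y/M$ as the estimate of $\theta_\alpha$ that the amplitude estimation algorithm produces. The goal is to show that, on neighbouring datasets $D \sim D'$, the relevant sensitivity of this quantity is at most $\pi/M$. Laplace noise of scale $\pi/M$ then corresponds to the standard calibration $\text{Lap}(\Delta/\epsilon)$ with $\Delta = \pi/M$ (taking $\epsilon = 1$, or more generally reading the scale as $\pi/(M\epsilon)$). The $\epsilon$-DP claim would then follow from the theorem that the Laplace mechanism is $\epsilon$-differentially private, together with post-processing (Theorem~\ref{prop:post-processing}) for any map from the angle back to $\widetilde{\alpha}=\sin^2(\pi y/M)$.

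First, I would use Corollary~\ref{cor:sensitivity-angle}, which gives $|\theta_\alpha - \theta_{\alpha'}| \le \sin^{-1}(1/\sqrt{n})$. Second, I would use the hypothesis $M < \pi/\sin^{-1}(1/\sqrt{n})$, which is equivalent to $\sin^{-1}(1/\sqrt{n}) < \pi/M$. So the true angle moves by less than one grid spacing $\pi/M$ of the phase-estimation output. Third, I would argue that the discretised value $\pi y/M$ is the grid point nearest to $\theta_\alpha$, namely $\pi\lfloor M\theta_\alpha/\pi\rceil/M$. Since $\theta_\alpha$ and $\theta_{\alpha'}$ differ by less than $\pi/M$, their nearest grid points differ by at most one grid step, so $|\pi y/M - \pi y'/M| \le \pi/M$. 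This is the sensitivity of the idealised (rounded) angle estimate, and adding $\text{Lap}(\pi/M)$ noise yields the claimed guarantee.

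The main obstacle is the third step. The output $y$ of the algorithm in~\cite{Brassard_2002} is random: from Eq.~\eqref{eq:phiD_3} it can concentrate near $M\theta_\alpha/\pi$ or near $M(1-\theta_\alpha/\pi)$, and it lands on a neighbouring grid point only with probability at least $8/\pi^2$. So $y$ is not a deterministic function of $D$, and a worst-case sensitivity bound on $y$ itself fails. My first attempt would be to condition on the event that the algorithm succeeds and returns one of the two grid points adjacent to $M\theta_\alpha/\pi$ (folding the mirror outcome via $\theta\mapsto \pi-\theta$, which leaves $\sin^2$ unchanged). On that event the output is within one grid step of the truth, and neighbouring datasets shift the truth by less than one grid step. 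The rounded values then differ by at most $\pi/M$; if the two adjacent points are allowed, the crude bound is $2\pi/M$, and to get exactly $\pi/M$ I would couple the two runs to the same rounding direction. The failure probability, or alternatively applying the Laplace noise to the deterministic nearest-grid-point value that the algorithm approximates, would be absorbed either into the idealisation or as a $\delta$ term. I would state this idealisation explicitly, since the corollary as phrased only holds cleanly for the nearest-grid-point version.
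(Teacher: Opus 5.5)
Your argument is the same as the paper's: Corollary~\ref{cor:sensitivity-angle} together with $M < \pi/\sin^{-1}(1/\sqrt{n})$ means the angle moves by less than one grid step $\pi/M$, so the discretised angles on neighbouring datasets differ by at most $\pi/M$, and Laplace noise of scale $\pi/(M\epsilon)$ (the scale the paper's proof actually uses) gives $\epsilon$-DP. Your caveat is also correct and applies equally to the paper's proof, which silently treats $y$ as the nearest grid point to $M\theta_\alpha/\pi$ and never folds the mirror outcome $y\approx M(1-\theta_\alpha/\pi)$; without your folding, that outcome jumps from $0$ to $M-1$ where $M\theta_\alpha/\pi$ crosses $1/2$, so the raw $\pi y/M$ changes by nearly $\pi$ even in the idealised model, while your conditioning patch would only give $(\epsilon,\delta)$-DP with $\delta$ of order $1-8/\pi^2$, which is not negligible.
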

\begin{proof}
From Corollary~\ref{cor:sensitivity-angle}, if we add 
\[
\text{Lap}\left(\frac{\sin^{-1}(1 /\sqrt{n})}{\epsilon}\right)
\]
to the angle $\theta_\alpha$ of the quantum amplitude estimation algorithm, then we achieve $\epsilon$-DP, since the sensitivity of the angle is $\sin^{-1}(1 /\sqrt{n})$. However, we note that the measured angles are encoded in steps of $\pi/M$. Thus, if $\pi/M$ is greater than $\sin^{-1}(1 /\sqrt{n})$, then the measured angle over two neighboring datasets can differ by at most $\pi/M$. Now
\[
  M < \frac{\pi}{\sin^{-1}(1 /\sqrt{n})} \Rightarrow \frac{\pi}{M} >  \sin^{-1}(1 /\sqrt{n}) 
\]
and therefore adding noise $\text{Lap}(\pi/M\epsilon)$ to $\pi y/M$ (before applying the $\sin^2$ function) is $\epsilon$-DP.
\end{proof}

For example, with $n = 10^6$, to apply the above mechanism we need $M < 3142$. However, the drawback of this mechanism is that we end up adding more noise since $\pi/M$ needs to be greater than the global sensitivity of the angle. Thus we need to add noise to the angle $\theta_\alpha$ itself before measurement. 

\descr{Alternative Mechanism.} To do so, we define the unitary operator 
\[
U_\eta \ket{\phi_\pm} = e^{i 2 \eta}\ket{\phi_\pm}
\]
where 
\[
\eta \sim \text{Lap}\left(\frac{\sin^{-1}(1 /\sqrt{n})}{\epsilon}\right)
\]
This operator can be realized by using the general definition of the operator $Q$ given in \cite{Brassard_2002} as
\[
Q = Q(A, q, \varphi_1, \varphi_2) = -AS_0(\varphi_1)A^{\dagger}S_\mathsf{G}(\varphi_2)
\]
where $S_\mathsf{G}(\varphi_2)$ is the generalization of $S_\mathsf{G}$ which maps
\[
\ket{x} \rightarrow \begin{cases}
    e^{i \varphi_2}\ket{x}, &\quad \text{ if } q(x) = 1 \\
    \ket{x}, &\quad \text{ if } q(x) = 0
\end{cases}
\]
Given this, we have $U_\eta := Q(A, q, 2\eta, -2\eta)$. Further note that applying the operator $\Lambda_M$ on $U_\eta$ has the effect:
\[
\Lambda_M(U_\eta):\;
\ket{j}\ket{x} \longmapsto \ket{j}U_\eta^{\,j}\ket{x}.
\]
Our idea to apply DP noise of scale $\eta$ to the angle $\theta_\alpha$ is to apply $\Lambda_M(U_\eta)$ after $\Lambda_M(Q)$. To see why this works, note that 
% It proves that this protocol is DP with respect to the classical data as well. Now it is remaining to add some noise from a $\text{Lap}(\pi/M(\epsilon-\epsilon_{dep}))$ distribution to the phase to ensure that the protocol has $\epsilon$ DP sensitivity. 
% To implement that, we are going to apply $\Lambda_M(U_\eta)$ after $\Lambda_M(Q)$.
% Define,
% \[
% U_\eta := Q(A,X,2\eta,-2\eta).
% \]
% Hence, 
% \has{I think we should define the eigenvalues as $e^{2i\eta}$. Then you can write them as $e^{2\pi i(\eta/\pi)}$}
% \[
% U_\eta \ket{\Psi_+} = e^{2 i \eta}\ket{\Psi_+}
% \]
% and,
% \[
% \Lambda_M(U_\eta):\;
% \ket{j}\ket{x} \longmapsto \ket{j}\bigl(U_\eta^{\,j}\ket{x}\bigr).
% \]
if we apply $\Lambda_M(U_\eta)$ to  $\ket{\phi_D^{(3)}}$ from Eq.~\eqref{eq:phiD_3}, we get (ignoring the $\ket{\phi_-}$ term which has a similar effect)

\begin{align*}
\Lambda_M(U_\eta)
\left(
  \frac{1}{\sqrt{M}}
  \sum_{j=0}^{M-1}
    e^{i2\pi  j \frac{\theta_\alpha}{\pi}}
    \ket{j}\otimes\ket{\phi_+}
\right)
&=
\frac{1}{\sqrt{M}}
\sum_{j=0}^{M-1}
  e^{i2\pi j \frac{\theta_\alpha}{\pi}}
  \ket{j}\otimes U_\eta^{\,j}\ket{\phi_+} \\[4pt]
&=
\frac{1}{\sqrt{M}}
\sum_{j=0}^{M-1}
  e^{i2\pi j \frac{\theta_\alpha}{\pi}}
  \ket{j}\otimes
  e^{i2\pi j \frac{\eta}{\pi}}\ket{\phi_+} \\
&=
\frac{1}{\sqrt{M}}
\sum_{j=0}^{M-1}
  e^{i2\pi  j\left(\frac{\theta_\alpha}{\pi}+\frac{\eta}{\pi}\right)}
  \ket{j}\otimes\ket{\phi_+} \\
&=
\ket{S_M\Big(\frac{\theta_\alpha+\eta}{\pi}\Big)}\otimes\ket{\phi_+}
\end{align*}
The action on $\ket{\phi_-}$ is similar. Therefore, the state 
before applying $\text{QFT}_M^\dagger$ is
\[
\ket{\phi_D^{(4)}}=\frac{e^{i\theta_{\alpha}}}{\sqrt{2}}\;\ket{S_M(\tfrac{\theta_{\alpha}+\eta}{\pi})}\otimes\ket{\Psi_+}
\;-\;
\frac{e^{-i\theta_{\alpha}}}{\sqrt{2}}\;\ket{S_M(1-\tfrac{\theta_{\alpha}+\eta}{\pi})}\otimes\ket{\Psi_-}.
\]
After applying $\text{QFT}_M^\dagger$ and measuring in the computational basis, the measured value $y$ is differentially private as $\eta$ is scaled according to the global sensitivity of the angle $\theta_\alpha$.

\subsection{Inherent Noise Due to the Depolarizing Channel} 

The quantum circuits acting on the state $\phi_D$ are inherently noisy. It is natural to expect that this noise may provide some uncertainty in the outcome leading naturally to some sort of differentially private guarantee. Researchers have indeed looked into this and have proven differential privacy properties of noisy quantum channels \cite{angrisani2022differential, zhou2017differential, yuxuan-quantum-adversarial-attacks, guan2025optimalmechanismsquantumlocal}. Here we take an example of the depolarizing channel. In particular we assume that in between any unitary used in the quantum circuit to compute $q(D)$, depolarizing noise is applied. Now using some properties of the depolarizing channel, as described below, we can factor this into the differentially private guarantees of the two differentially private algorithms described earlier. Depolarizing noise commonly arises in practical quantum devices, particularly in near term realizable devices.

\begin{definition}[Depolarizing channel]
Given a state $\rho$ in an $m$-dimensional Hilbert space, the depolarizing channel with probability parameter $p$ is defined as
\[
\mathcal{D}_p(\rho) \;=\; (1-p)\,\rho \;+\; p\,\frac{I}{m}, \qquad p\in[0,1].
\]
\qed
\end{definition}
If for all inputs $\rho$ and $\sigma$ we have $\mathcal{T}(\rho, \sigma) \leq \tau$, then the mechanism $\mathcal{D}_p(\rho)$ satisfies $(\tau, \epsilon, 0)$-quantum differential privacy~\cite{zhou2017differential}, with 
\begin{equation}
\label{eq:eps-depolar}
    \epsilon = \ln\left( 1 + \frac{1 - p}{p}\, m \tau \right)
\end{equation}
Furthermore, for all neighboring datasets $D$ and $D'$ with quantum encoding $\rho_D$ and $\rho_{D'}$, respectively, we know from Section~\ref{subsec:dp-amplify} that 
\[
\mathcal{T}(\rho_D, \rho_{D'}) \leq \sqrt{1 - \hat\kappa_\phi}
\]
Thus, $\mathcal{D}_p(\rho_D)$ satisfies $(\sqrt{1 - \hat\kappa_\phi}, \epsilon, 0)$-quantum differential privacy. It follows from Lemma 1 in~\cite{angrisani2022differential} that this mechanism is $(\epsilon, 0)$-differentially private. 

With these results, we can utilize this channel as follows. Let us assume that $\epsilon$ is our overall budget. Let $(\epsilon_1, \delta)$ be the differentially private guarantee from the Laplace mechanism (through either of the two algorithms defined earlier), and $\epsilon_2$ be the guarantee through the depolarizing channel. Then the mechanism as a whole, with explicit Laplace noise and implicit depolarization noise, is $ (\epsilon_1, \delta) + (\epsilon_2, 0) = (\epsilon_1 + \epsilon_2, \delta) := (\epsilon, \delta)$-differentially private through sequential composition of differential privacy (see Theorem~\ref{prop:composition}).

The only thing to consider is that depolarizing noise is applied after every unitary. It turns out, using the properties of the depolarizing channel as shown below, we only need to account for an upper bound on the total probability parameter $p_\text{tot}$ of the composition of all depolarizing channels. Specifically, define
\[
p_{\mathrm{tot}} \;:=\; 1- \prod_{i=1}^{\ell} (1-p_i),
\]
The following are known results about the depolarizing channel.

\begin{lemma}[Unitary covariance]
For any unitary $U$, $\mathcal{D}_p(U\rho U^\dagger) \;=\; U\,\mathcal{D}_p(\rho)\,U^\dagger.
\quad$ 
\end{lemma}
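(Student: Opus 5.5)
We will verify the identity directly from the definition of $\mathcal{D}_p$, using only linearity of conjugation by $U$ and the fact that $U$ is unitary. The plan is to expand both sides and show that they agree term by term.

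First, expand the left-hand side using the definition of the depolarizing channel applied to the state $U\rho U^\dagger$:
\[
\mathcal{D}_p(U\rho U^\dagger) = (1-p)\,U\rho U^\dagger + p\,\frac{I}{m}.
\]
Second, expand the right-hand side. The map $X \mapsto UXU^\dagger$ is linear, so
\[
U\,\mathcal{D}_p(\rho)\,U^\dagger = (1-p)\,U\rho U^\dagger + \frac{p}{m}\,U I U^\dagger .
\]

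The only step with any content is the maximally mixed term. Unitarity gives $UIU^\dagger = UU^\dagger = I$, so the second summand equals $p\,I/m$. The two expansions then coincide, which proves the claim.

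We also need the dimension to match: $U$ must act on the same $m$-dimensional space on which $\mathcal{D}_p$ is defined. Under that convention there is no real obstacle. Since $p_{\mathrm{tot}}$ is defined from several channels, the paper presumably uses this lemma to commute all depolarizing channels past the unitaries and collect them together. For that purpose it suffices that every unitary and every channel acts on the full system.
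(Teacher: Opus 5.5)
Your proof is correct and is essentially the paper's argument: the paper expands $\mathcal{D}_p(U\rho U^\dagger)$ from the definition and factors the conjugation out of both terms, $(1-p)U\rho U^\dagger + p\,I/d = U\big((1-p)\rho + p\,I/d\big)U^\dagger$, which rests on the same fact $UIU^\dagger = I$ that you use. Your caveat that $U$ must act on the same space as the channel is sensible, and the paper assumes it implicitly.
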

\begin{proof}
\[
\mathcal{D}_p(U\rho U^\dagger)=(1-p)U\rho U^\dagger + p\,\frac{I}{d}
=U\big((1-p)\rho+p\,\tfrac{I}{d}\big)U^\dagger
=U\,\mathcal{D}_p(\rho)\,U^\dagger.
\]
\end{proof}
\begin{lemma}[Composition]\label{lem:composition}
For any $p_1,p_2\in[0,1]$,
\[
\mathcal{D}_{p_2}\circ \mathcal{D}_{p_1} \;=\; \mathcal{D}_{\,p_1+p_2-p_1p_2}
\;=\; \mathcal{D}_{\,1-(1-p_1)(1-p_2)}.
\]
And generalizing it for any $\ell \ge 2$,
\[
\mathcal{D}_{p_\ell}\circ\cdots\circ \mathcal{D}_{p_1}
\;=\;
\mathcal{D}_{\,1-\prod_{i=1}^{\ell}(1-p_i)}.
\]
\end{lemma}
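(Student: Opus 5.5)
The statement to prove is Lemma~\ref{lem:composition}: composing depolarizing channels gives another depolarizing channel. I would argue by direct computation from the definition of $\mathcal{D}_p$. The only structural fact needed is that $\mathcal{D}_p$ is affine in its input: trace-preserving on normalized states, and fixing the maximally mixed state $I/m$. So I would first record that $\mathcal{D}_p(I/m) = (1-p)I/m + pI/m = I/m$. I would also note that $\mathcal{D}_p$ is linear on operators when written as $\mathcal{D}_p(\rho) = (1-p)\rho + p\,\mathsf{tr}(\rho)\,I/m$, which agrees with the definition on density matrices.

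For the two-channel case I compute
\[
\mathcal{D}_{p_2}(\mathcal{D}_{p_1}(\rho)) = (1-p_2)\big((1-p_1)\rho + p_1 I/m\big) + p_2 I/m .
\]
Collecting terms gives $(1-p_1)(1-p_2)\rho + \big(1-(1-p_1)(1-p_2)\big) I/m$, because the coefficients of $I/m$ are $(1-p_2)p_1 + p_2 = p_1+p_2-p_1p_2$. This is exactly $\mathcal{D}_{p_1+p_2-p_1p_2}(\rho)$. I would also check that the new parameter lies in $[0,1]$, which holds since $(1-p_1)(1-p_2)\in[0,1]$.

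For general $\ell$ I use induction on $\ell$, with the two-channel case as the base. Assume $\mathcal{D}_{p_{\ell-1}}\circ\cdots\circ\mathcal{D}_{p_1} = \mathcal{D}_{P}$ with $1-P = \prod_{i=1}^{\ell-1}(1-p_i)$. Then the two-channel identity gives $\mathcal{D}_{p_\ell}\circ\mathcal{D}_P = \mathcal{D}_{1-(1-P)(1-p_\ell)}$, and $(1-P)(1-p_\ell) = \prod_{i=1}^{\ell}(1-p_i)$.

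There is no real obstacle here. The one point that needs care is that both channels act on the same $m$-dimensional space, so the same $I/m$ appears throughout. The coefficient bookkeeping is cleanest in the \emph{survival probability} form $1-p$, which is simply multiplicative under composition.
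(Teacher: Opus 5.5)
Your proof is correct and follows essentially the same route as the paper. Like the paper, you expand $\mathcal{D}_{p_2}(\mathcal{D}_{p_1}(\rho))$ directly from the definition, collect the coefficient $(1-p_1)(1-p_2)$ of $\rho$, and extend to $\ell$ channels by repeatedly applying the two-channel identity; your induction on $\ell$ is simply a formal version of the paper's ``iterate the two-step identity'' argument. Your remarks about linearity and the fixed point $I/m$ are harmless but unnecessary, since $\mathcal{D}_{p_1}(\rho)$ is itself a density matrix and the definition of $\mathcal{D}_{p_2}$ applies to it directly.
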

\begin{proof}
\begin{align*}
\mathcal{D}_{p_2}(\mathcal{D}_{p_1}(\rho))
&=\mathcal{D}_{p_2}\big((1-p_1)\rho + p_1\tfrac{I}{d}\big) \\
&=(1-p_2)\big((1-p_1)\rho + p_1\tfrac{I}{d}\big) + p_2\tfrac{I}{d} \\
&=(1-p_1)(1-p_2)\rho + \big[p_1(1-p_2)+p_2\big]\tfrac{I}{d}.
\end{align*}
Let $p_\ast := p_1+p_2-p_1p_2 = 1-(1-p_1)(1-p_2)$. Then the last line equals
$(1-p_\ast)\rho + p_\ast \tfrac{I}{d} = \mathcal{D}_{p_\ast}(\rho)$.

For $\ell>2$, iterate the two-step identity. After $j$ steps the coefficient of $\rho$ is
$\prod_{i=1}^{j}(1-p_i)$; thus after $\ell$ steps the channel is
$(1-p_{\mathrm{tot}})\rho + p_{\mathrm{tot}}\tfrac{I}{d}$ with
$p_{\mathrm{tot}} = 1-\prod_{i=1}^{m}(1-p_i)$.
\end{proof}

Now combining these two lemmas we get,
\begin{theorem}\label{thm:overall}
Let $U_1,\dots,U_\ell$ be arbitrary unitaries on same $m$-dimensional register and
$\mathcal{D}_{p_1},\dots,\mathcal{D}_{p_\ell}$ be depolarizing channels with specified probability parameters. Then
\[
\mathcal{D}_{p_\ell}\!\circ\!\mathcal{A}d_{U_\ell}\!\circ\cdots\circ\mathcal{D}_{p_1}\!\circ\!\mathcal{A}d_{U_1}
\;=\;
\mathcal{A}d_{U_\ell \cdots U_1}\!\circ\!\mathcal{D}_{\,1-\prod_{i=1}^{\ell}(1-p_i)}.
\]
where, $\mathcal{A}d_U(\rho) := U\rho U^\dagger$.
\end{theorem}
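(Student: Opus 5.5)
The plan is to commute every depolarizing channel past the unitaries to its right using unitary covariance, and then merge all the depolarizing channels into one using Lemma~\ref{lem:composition}. I will argue by induction on $\ell$. For $\ell = 1$ the claim reads $\mathcal{D}_{p_1}\circ\mathcal{A}d_{U_1} = \mathcal{A}d_{U_1}\circ\mathcal{D}_{p_1}$. This is exactly the unitary covariance lemma: written as channels, it says $\mathcal{D}_p(U\rho U^\dagger) = U\mathcal{D}_p(\rho)U^\dagger$ for every $\rho$.

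For the inductive step, suppose the identity holds for $\ell-1$. Then
\[
\mathcal{D}_{p_\ell}\circ\mathcal{A}d_{U_\ell}\circ\Big(\mathcal{D}_{p_{\ell-1}}\circ\mathcal{A}d_{U_{\ell-1}}\circ\cdots\circ\mathcal{D}_{p_1}\circ\mathcal{A}d_{U_1}\Big)
=\mathcal{D}_{p_\ell}\circ\mathcal{A}d_{U_\ell}\circ\mathcal{A}d_{U_{\ell-1}\cdots U_1}\circ\mathcal{D}_{p'},
\]
where $p' = 1-\prod_{i=1}^{\ell-1}(1-p_i)$. Two facts then finish the step. First, $\mathcal{A}d_{U}\circ\mathcal{A}d_{V} = \mathcal{A}d_{UV}$, since $U(V\rho V^\dagger)U^\dagger = (UV)\rho(UV)^\dagger$; this lets me write $\mathcal{A}d_{U_\ell}\circ\mathcal{A}d_{U_{\ell-1}\cdots U_1} = \mathcal{A}d_{U_\ell\cdots U_1}$. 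Second, unitary covariance applied with the single unitary $W = U_\ell\cdots U_1$ moves $\mathcal{D}_{p_\ell}$ to the right of $\mathcal{A}d_W$. The result is
\[
\mathcal{A}d_{W}\circ\mathcal{D}_{p_\ell}\circ\mathcal{D}_{p'}.
\]
Lemma~\ref{lem:composition} merges $\mathcal{D}_{p_\ell}\circ\mathcal{D}_{p'}$ into $\mathcal{D}_{1-(1-p')(1-p_\ell)}$. Finally, $(1-p')(1-p_\ell) = \prod_{i=1}^{\ell}(1-p_i)$, which closes the induction.

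The only real care point is bookkeeping rather than a true obstacle. Unitary covariance must be used with the maximally mixed state $I/m$ on the same $m$-dimensional register for all channels, and the earlier lemmas write $I/d$ while the theorem fixes dimension $m$. The key fact is $U(I/m)U^\dagger = I/m$, so I will check that this single dimension is used consistently. I will also keep the order of composition straight: $\mathcal{A}d_{U_1}$ acts first, so the product must come out as $U_\ell\cdots U_1$.

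I also note that the composition lemma is symmetric in $p_1$ and $p_2$. As a result, the order in which the depolarizing channels are merged does not matter.
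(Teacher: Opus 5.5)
Your proof is correct and is essentially the paper's argument. The paper obtains the theorem simply by combining the unitary covariance lemma with Lemma~\ref{lem:composition}, and your induction makes that combination explicit: $\mathcal{A}d_U\circ\mathcal{A}d_V=\mathcal{A}d_{UV}$ collapses the unitaries, covariance moves each $\mathcal{D}_{p_i}$ past them, and the composition lemma merges the parameters into $1-\prod_{i=1}^{\ell}(1-p_i)$.
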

This theorem ensures that, regardless of where the depolarizing noise occurs, if we know the total depolarization bound or more precisely, an upper bound on $p_{\text{tot}}$, we can use it to obtain $\epsilon_2$ by inserting $p = p_\text{tot}$ and $\tau = \sqrt{1 - \hat\kappa_\phi}$ into Eq.~\eqref{eq:eps-depolar}.

\section{Applying Homomorphic Encryption to Delegate Computation}
\label{sec:homomorphic}
To hide the contents of the dataset and the results of the queries from the server $\server$, we can use the quantum one-time pad (QOTP). For both methods for finding the query answer, we can encrypt the initial state using the quantum one-time pad (QOTP), as 
\[
X^{\otimes a}Z^{\otimes b}\ket{\phi_D}
\]
where $a$ and $b$ are random $m$-bit strings known as the \emph{keys}. Given a universal gate set such as 
\[
\{X, Z, \text{CNOT}, H, S, T\},
\]
we can implement any quantum circuit using these gates. The Pauli gates $X$ and $Z$ commute or anti-commute with the encrypted state above, and hence we can apply them as if we are applying them directly to the unencrypted state (with a possible inconsequential change in global phase). On the other hand $H$, $S$ and $\text{CNOT}$ are Clifford gates, and they too can be applied to the encrypted state with a possible update to the keys $a$ and $b$. The final gate $T$ can be implemented using so-called magic states. These are applied to a pre-computed state independent of $\ket{\phi_D}$. 

As mentioned in Section~\ref{subsec:pred-queries}, a predicate query can be implemented using a combination of $X$, $\text{CNOT}$ and multi-control Toffoli gates. A multi-control Toffoli gate can be implemented as a series of Toffoli gates, which themselves can be implemented using $H$, $T$, \text{CNOT} and $S$ gates~\cite[\S 4.3]{nielsen-chuang}. Thus, the entire query circuit can be done homormorphically, with key updates as required by the gates applied. Since the query itself is public, the key updates can be done by the client itself. For completeness, we show this in Appendix~\ref{app:hom}. 

Depending on the query, and the method to obtain query answers, such homomorphic computations will involve significant client involvement. Of particular interest is the case when a query can be answered only via Pauli and Clifford gates. We see that if the dataset is binary, and we are interested in a predicate query of a single attribute, e.g., if the $i$th binary attribute is 0 or 1, this can be done using only Clifford gates if we are to apply the direct measurement method. For the rest of the queries and datasets, we require the use of Toffoli gates, which inadvertently need magic states to homomorphically implement the $T$ gate. 

\section{Related Work} 
The framework of differential privacy was proposed by Dwork et al. \cite{dwork2006calibrating, dwork-dp-book}. Counting queries and its subset, predicate queries, are some of the basic statistics desired from a dataset, and therefore, unsurprisingly, they have been studied in depth in the differential privacy literature~\cite{ullman2013answering, mckenna2021hdmm, smith2022making}. 
%\textbf{Classical Differential Privacy. } In the classical realm,  proposed the framework of differential privacy. The concept of answering the counting queries, such as predicate queries, plays a crucial role in the formalism of classical DP. This aspect of DP has been studied numerous times clasically, for example,.
The concept of quantum differential privacy (QDP) was first proposed by Zhou and Ying~\cite{zhou2017differential} who defined QDP using trace distance and quantum operations. The work in~\cite{hirche-qdp-inf-theory} gave an information-theoretic formulation of QDP. Our work is more related to recent works that connect classical differential privacy to its quantum variant, including deriving differentially private properties of quantum encoded datasets~\cite{aaronson2019gentle, angrisani2023unifyingframeworkdifferentiallyprivate, angrisani-ldp, angrisani2022differential, zhao-bridging,guan2025optimalmechanismsquantumlocal, asghar2026efficient}. 
The key motivation of our work comes from two main results in~\cite{angrisani2022differential}. The first result is where they show that a quantum algorithm that only accesses a quantum encoded dataset as input possesses some differential privacy guarantees. This motivated us to analyze differential privacy properties of algorithms that compute queries on a quantum encoded dataset. The other result from~\cite{angrisani2022differential} is the repeated measurement differentially private algorithm which adds Laplace noise to the final result. We refine the analysis of this algorithm for counting queries and show that privacy is in fact amplified. Our use of depolarizing noise is motivated by the work of Du et al.~\cite{yuxuan-quantum-adversarial-attacks} who show that inherent depolarizing noise in quantum channels can be utilized to make a machine learning algorithm robust to adversarial inputs: imperceptible changes made to inputs that nevertheless induce misclassifications. 

Finally, our second mechanism uses the amplitude estimation algorithm from Brassard et al.\cite{Brassard_2002}. Their algorithm uses quantum phase estimation to approximate the amplitude of a target state. Due to the reliance on phase estimation, it is not suitable for Noisy Intermediate-Scale Quantum (NISQ) devices. To address this limitation, two main alternatives have been developed in the literature: the maximum likelihood based amplitude estimation approach~\cite{suzuki2020amplitude, callison2023improvedmaximumlikelihoodquantumamplitude, giurgica2022low, uno2021modified} and the iterative amplitude estimation approach \cite{grinko2021iterative, aaronson2020quantum}. 
%More recently, the Adaptive Windowed Quantum Amplitude Estimation (AWQAE) framework introduced a highly scalable, modular approach that decouples estimation precision from qubit counts via iterative, chunk-based phase estimation and classical ambiguity correction\cite{shukla2025modularquantumamplitudeestimation}. 
However, we use the canonical quantum amplitude estimation algorithm in our work since it requires  limited number of initial state preparations, and the fact that our focus is not on NISQ devices.  

\section{Conclusion}
We have shown how to answer counting queries with differential privacy through two canonical methods of amplitude estimation with differential privacy. Our results show refined privacy properties of these techniques for the specific case of counting queries. Several aspects of our work may be worth exploring in future work. One such aspect is looking at other classes of queries which may enjoy some privacy properties that are not present in the classical setting.

%Compared with classical delegated approaches, our framework offers two main advantages. First, quantum encoding itself amplifies privacy by making neighboring datasets harder to distinguish. Second, amplitude estimation provides a quadratic query advantage over direct sampling in high-precision settings. In delegated computation, QOTP-based encrypted evaluation also offers a promising alternative to classical FHE, particularly for circuits dominated by Pauli and Clifford operations. Overall, quantum encoding serves not just as a data representation, but as a resource for both privacy and computational gain.

%At the same time, the limitations of the present work should be emphasized clearly. 
A drawback of the repeated measurement method for answering counting queries is that it requires multiple state preparations. An interesting question is whether we can answer multiple queries on a single state, even when it is altered after measurement. The amplitude estimation approach, while more query-efficient, is tied to canonical phase estimation based machinery and therefore assumes an error-corrected and well-controlled quantum setting rather than NISQ hardware. 
%.

%These limitations point directly to several future directions. 

%Being said that, a priority is to extend the analysis beyond predicate queries to broader classes of statistical queries and to more general quantum data encodings. 

Another intriguing question is whether there are types of queries that can be answered via only Pauli or Clifford gates. Such queries can then be outsourced to a server who can compute them blindly and non-interactively. As we discuss in this paper, such delegated homomorphic execution is generally not free, in the sense that circuits for non-trivial predicate queries require Toffoli gates. These gates inevitably use non-Clifford gates, which only satisfy homomorphic properties through magic-state preparation resulting in non-negligible client interaction for key updates. We have shown that for a binary dataset and a predicate query on a single attribute, this is indeed possible with Clifford gates.   

\bibliographystyle{plain}
\bibliography{quantum-queries}

\appendix 

\section{Homomorphic Operations in the Quantum One-Time Pad}
\label{app:hom}
Assume we are given the state $\ket{\phi_D}$ from the Hilbert space $\mathcal{H}^{\otimes m}$. Let $a, b \in \{0, 1\}^m$ be $m$-bit strings. Denote by $a_i$, respectively $b_i$, the $i$th bit of $a$, respectively $b$. The state $\ket{\phi_D}$ is encryted as
\begin{equation}
\label{eq:encrypted-state}
X^aZ^b\ket{\phi_D} = (X^{a_1} \otimes \cdots \otimes X^{a_m})(Z^{b_1} \otimes \cdots \otimes Z^{b_m}) \ket{\phi_D}
\end{equation}
We will show how we can apply various Pauli, Clifford and non-Clifford gates homomorphically. The latter requires special circuitry and adaptive measurements. These notes are detailed from~\cite[\S 13]{vidick2023introduction}

\paragraph{The $X$ and $Z$ Gates.} Suppose we wish to apply $X$ to the $i$th qubit of $\ket{\phi_D}$. That is:
\[
X^{e_i}\ket{\phi_D} = (I \otimes \cdots \otimes X \otimes \cdots \otimes I) \ket{\phi_D}
\]
where the $X$ appears is the $i$th position, as indicated by the use of the symbol $e_i$ as a superscript to $X$. On the encrypted state, this only affects the operators on the $i$th qubit. Thus we get:
\[
X X^{a_i}Z^{b_i} \ket{\phi_D} = X^{a_i} XZ^{b_i} \ket{\phi_D}= (-1)^{b_i} X^{a_i}Z^{b_i} X  \ket{\phi_D}
\]
where we have used the fact that $X$ anti-commutes with $Z$. Therefore, with the exception of an inconsequential global phase, this is as if $X$ was applied on the initial state. This example easily extends to the $Z$ gate.

\paragraph{The $S$ and $H$ Gates.} Assume we wish to apply the $S$ gate on the $i$th qubit. As before this only changes the $i$th qubit. So we only focus on the $i$th encrypted qubit. Recall the identities~\cite[\S 10.5.2]{nielsen-chuang}:
\[
SXS^\dagger = Y = iXZ, \; SZS^\dagger = Z
\]
We have
\begin{align*}
 S X^{a_i}Z^{b_i} \ket{\phi_D} &= S X^{a_i} S^\dagger S Z^{b_i} \ket{\phi_D} = Y^{a_i} S Z^{b_i}\ket{\phi_D} \\
 &= Y^{a_i} S Z^{b_i} S^\dagger S \ket{\phi_D} = Y^{a_i} Z^{b_i} S\ket{\phi_D} \\
 &= (iXZ)^{a_i} Z^{b_i} S\ket{\phi_D} = i^{a_i} X^{a_i} Z^{a_i \oplus b_i} S \ket{\phi_D}
\end{align*}
Thus, the client can update the $i$th bits of the keys $a$ and $b$ as :
\[
a_i \leftarrow a_i, \; b_i \leftarrow a_i \oplus b_i
\]
Again apart from the unimportant global phase $i^{a_i}$, this is as if the $S$ gate is directly applied to the original state.

For the Hadamard gate $H$, recall the identities~\cite[\S 10.5.2]{nielsen-chuang}:
\[
HXH^\dagger = Z, \; HZH^\dagger = X
\] 
Thus applying the $H$ gate to the $i$th qubit yields
\begin{align*}
 H X^{a_i}Z^{b_i} \ket{\phi_D} &= H X^{a_i} H^\dagger H Z^{b_i} \ket{\phi_D} = Z^{a_i} H Z^{b_i}\ket{\phi_D} \\
 &= Z^{a_i} H Z^{b_i} H^\dagger H \ket{\phi_D}  = Z^{a_i} X^{b_i} H\ket{\phi_D} \\
 &= (-1)^{a_ib_i} X^{b_i} Z^{a_i } H \ket{\phi_D}
\end{align*}
The clients updates the $i$th bits of the keys $a$ and $b$ by swapping $a_i$ and $b_i$.  

\paragraph{The CNOT Gate.} For simplicity of notation, let $U$ denote the CNOT gate. Recall the following identities~\cite[\S 10.5.2]{nielsen-chuang}:

\begin{align*}
    UX_1U^\dagger &= X_1X_2, \\ UX_2U^\dagger &= X_2 \\
    UZ_1U^\dagger &= Z_1, \\
    UZ_2U^\dagger &= Z_1Z_2 \\
\end{align*}
where the subscripts $1$ and $2$ represent the control and target qubits respectively. Let us assume that the control qubit is the $i$th qubit and the target qubit is the $j$th qubit. This implies
\begin{align*}
  U(X^{a_i}Z^{b_i} \otimes X^{a_j}Z^{b_j}) &= U(X^{a_i} \otimes X^{a_j})(Z^{b_i} \otimes Z^{b_j})  \\
  &= U(X^{a_i} \otimes I)(I \otimes X^{a_j})(Z^{b_i} \otimes Z^{b_j}) \\
  &= U(X^{a_i} \otimes I)U^\dagger U (I \otimes X^{a_j})(Z^{b_i} \otimes Z^{b_j}) \\
  &= (X^{a_i} \otimes X^{a_i}) U (I \otimes X^{a_j})(Z^{b_i} \otimes Z^{b_j}) \\
  &= (X^{a_i} \otimes X^{a_i}) U (I \otimes X^{a_j})U^\dagger U(Z^{b_i} \otimes Z^{b_j}) \\
  &= (X^{a_i} \otimes X^{a_i})(I \otimes X^{a_j})U(Z^{b_i} \otimes Z^{b_j}) \\
  &= (X^{a_i} \otimes X^{a_i \oplus a_j})U(Z^{b_i} \otimes I) (I \otimes Z^{b_j})\\
  &= (X^{a_i} \otimes X^{a_i \oplus a_j})U (Z^{b_i} \otimes I) U^\dagger U (I \otimes Z^{b_j}) U^\dagger U\\
  &= (X^{a_i} \otimes X^{a_i \oplus a_j}) (Z^{b_i} \otimes I) (Z^{b_j}  \otimes Z^{b_j}) U\\
  &= (X^{a_i} \otimes X^{a_i \oplus a_j}) (Z^{b_i \oplus b_j} \otimes Z^{b_j}) U\\
  &= (X^{a_i} Z^{b_i \oplus b_j} \otimes X^{a_i \oplus a_j} Z^{ b_j})U
\end{align*}
Thus the client updates the $i$th and $j$th bits of the keys as 
\[
a_i \leftarrow a_i,\; b_i \leftarrow b_i \oplus b_j,\; a_j \leftarrow a_i \oplus a_j,\; b_j \leftarrow b_j
\]
and otherwise the CNOT operator is shifted to the right. 

\paragraph{The $T$ Gate.} Since $T$ is a non-Clifford gate, it is handled differently. The $T$ gate is given by the matrix
\[
T = \begin{bmatrix}
    1 & 0 \\
    0 & e^{i\pi/4}
\end{bmatrix}
\]
This is done by first preparing the so-called magic state defined as:
\[
\ket{\pi/4} = T \ket{+} = \frac{1}{\sqrt{2}}(\ket{0} + e^{i\pi/4} \ket{1} )
\]
We also make use of the gate $S$ defined as
\[
S = \begin{bmatrix}
    1 & 0 \\
    0 & i
\end{bmatrix}
\]
with the following identities that have already been shown
\[
SXS^\dagger = Y = iXZ, \; SZS^\dagger = Z
\]
With this for any $a, b \in \{0, 1\}$, we get 
\begin{align*}
  SX^{a}Z^b  = i^a X^a Z^{a \oplus b} S
\end{align*}
as already shown. Once the magic state is prepared, the following circuit applies the $T$ gate to $\psi$. 

\begin{center}
\begin{quantikz}
\lstick{\ket{\pi/4}} &  \targ{} & \meter{} & \rstick{$e \in \{0, 1\}$} \\ 
\lstick{\ket{\psi}} &  \ctrl{-1} & \gate{S^e} & \rstick{$T$\ket{\psi}} \\
\end{quantikz}
\end{center}

To see why the result is correct, assume $\psi = \alpha \ket{0} + \beta \ket{1}$. Then, the first part of the circuit computes
\begin{align*}
 CX(\ket{\pi/4} \otimes \ket{\psi}) &=   \frac{1}{\sqrt{2}} CX\left(  \alpha \ket{0}\ket{0} + \beta  \ket{0}\ket{1} + \alpha e^{i\pi/4}\ket{1}\ket{0} + \beta e^{i\pi/4}  \ket{1}\ket{1}\right)\\
 &= \frac{1}{\sqrt{2}} \left(  \alpha \ket{0}\ket{0} + \beta  \ket{1}\ket{1} + \alpha e^{i\pi/4}\ket{1}\ket{0} + \beta e^{i\pi/4}  \ket{0}\ket{1}\right) \\
 &=  \frac{1}{\sqrt{2}} \ket{0} \otimes ( \alpha \ket{0} + \beta e^{i\pi/4}  \ket{1}) + \frac{1}{\sqrt{2}} \ket{1} \otimes ( \alpha e^{i\pi/4} \ket{0} + \beta   \ket{1}) \\
 &= \frac{1}{\sqrt{2}} \ket{0} \otimes T\ket{\psi} + \frac{1}{\sqrt{2}} \ket{1} \otimes T' \ket{\psi}
\end{align*}
where we define
\[
T' = \begin{bmatrix}
    e^{i\pi/4} & 0 \\
    0 & 1
\end{bmatrix}
\]
Now it is easy to see that $ST' = e^{i \pi/4} T$, which implies that 
\[
T' = e^{i \pi/4} S^\dagger T
\]
Thus, the above state becomes
\[
\frac{1}{\sqrt{2}} \ket{0} \otimes T\ket{\psi} + \frac{e^{i \pi/4}}{\sqrt{2}} \ket{1} \otimes S^\dagger T \ket{\psi}
\]
If now we measure the first qubit and the outcome is $e = 0$, then ignoring any global phase, the output is $T\ket{\psi}$. On the other hand, if the outcome is $e = 1$, then again ignoring global phase, the output is $S^e S^\dagger T \ket{\psi} = S S^\dagger T \ket{\psi} = T \ket{\psi}$, as desired. 

But the main point of the circuit is to apply it to an encrypted state. Let us assume that $\ket{\psi}$ is one of the encrypted qubits on which the $T$ gate needs to be applied. Let its encryption be $X^aZ^b \ket{\psi}$. Here $a$ and $b$ are random bits. We also encrypt the magic state $\ket{\pi/4}$ as $X^cZ^d \ket{\pi/4}$, with random bits $c$ and $d$. Let $U$ denote the CNOT gate with control being the second qubit and target being the first qubit. We have
\begin{align*}
    U(X^cZ^d \ket{\pi/4} \otimes X^aZ^b \ket{\psi}) &= U(X^cZ^d  \otimes X^aZ^b )(\ket{\pi/4} \otimes \ket{\psi}) \\
    &= U(X^c \otimes I)(Z^d \otimes I)(I \otimes X^a)(I \otimes Z^b)(\ket{\pi/4} \otimes \ket{\psi})\\
    &= (X^c \otimes I)U(Z^d \otimes I)(I \otimes X^a)(I \otimes Z^b)(\ket{\pi/4} \otimes \ket{\psi})\\
    &= (X^c \otimes I)(Z^d \otimes Z^d)U(I \otimes X^a)(I \otimes Z^b)(\ket{\pi/4} \otimes \ket{\psi})\\
    &= (X^c \otimes I)(Z^d \otimes Z^d)(X^a \otimes X^a)(I \otimes Z^b)U(\ket{\pi/4} \otimes \ket{\psi})\\
    &= (X^{a\oplus c} Z^d \otimes X^a Z^{b \oplus d} ) U (\ket{\pi/4} \otimes \ket{\psi})
\end{align*}
where we have used the conjugation identities for the CNOT operator as mentioned above. Now, through the analysis of the unencrypted case for the right hand term, and ignoring the global phase terms we get
\begin{align*}
    & (X^{a\oplus c} Z^d \otimes X^a Z^{b \oplus d} ) (\ket{0} \otimes T\ket{\psi} +  \ket{1} \otimes S^\dagger T \ket{\psi}) \\
    &= X^{a\oplus c} Z^d \ket{0} \otimes X^a Z^{b \oplus d} T\ket{\psi} \\
    &+ X^{a\oplus c} Z^d \ket{1} \otimes X^a Z^{b \oplus d} S^\dagger T \ket{\psi}
\end{align*}
Now first assume that $a \oplus c = 0$. If the measurement outcome is $e = 0$, then we get $X^a Z^{b \oplus d} T\ket{\psi}$ as the outcome. Thus this is the application of $T$ on the actual qubit, with the client having to update keys accordingly. On the other hand if $e = 1$, then we apply the $S$ gate to the outcome $X^a Z^{b \oplus d} S^\dagger T \ket{\psi}$ giving us
\[
S X^a Z^{b \oplus d} S^\dagger T \ket{\psi} = i^a X^a Z^{a \oplus b \oplus d} S S^\dagger T \ket{\psi}= i^a X^a Z^{a \oplus b \oplus d}  T \ket{\psi}
\]
Ignoring the global phase, this applies $T$ gate to the original qubit, with the client having to update the encryption keys accordingly. 

Next assume that $a \oplus c = 1$. Now, if the measurement outcome is $e = 0$, we get the output 
\[
X^a Z^{b \oplus d} S^\dagger T \ket{\psi} = (-i)^a S X^a Z^{a \oplus b \oplus d} T \ket{\psi},
\]
using the identity above to move $S$ to the other side. On the other hand if the measurement outcome is $e = 1$, then we get
\[
S X^a Z^{b \oplus d}  T \ket{\psi} 
\]
Thus, in both cases, i.e., $a \oplus c = 0$ or $a \oplus c = 1$, ignoring the global phase, the outcome is
\[
S^{a \oplus c} X^{a'} Z^{b'} T \ket{\psi}
\]
where $a'$ and $b'$ are client updates. This introduces the gate $S$ at the beginning if $a \oplus c = 1$. But this can be removed by the client requesting the server to apply $(S^{a \oplus c})^\dagger$. Crucially, this does not reveal information about the key $a$ which encrypts the data, since $a \oplus c = 1$ implies either $a = 0$ or $a = 1$ with equal probability, as $c$ is a random bit. 

\section{Complexity Analysis of the Two Methods to Estimate Query Answers}
\label{app:comp-analysis}
Let us analysis the complexities of two methods, i,e. the direct measurement method and the AE method. Our analysis is in terms of query complexity which is the indication of how many times we need to call the oracle to get a result within the error bound $\varepsilon$. Let us first discuss about the direct measurement method. Here, a single query (measurement) is a Bernoulli trial $X_i$ with success probability 
$\alpha$:$$X_i = \begin{cases} 1 & \text{with probability } \alpha \\ 0 & \text{with probability } 1-\alpha \end{cases}$$
The estimate of $\alpha$ with $t$ samples is 
$$\widetilde{\alpha} = \frac{1}{t} \sum_{i=1}^{t} X_i$$
The precision of this estimate is determined by the standard error (the standard deviation of the sampling distribution). Since the $t$ queries are independent, the variance of the average is
\[
\text{Var}(\widetilde{\alpha}) = \frac{\sum_{i=1}^t\text{Var}(X_i)}{t^2} = \frac{\alpha(1-\alpha)}{t}
\]
Thus, the standard deviation is
\[
\sqrt{\text{Var}(\widetilde{\alpha})} = \sqrt{\frac{\alpha(1-\alpha)}{t}}
\]
Hence, to achieve a target precision $\Delta$ we must have
$$\Delta = \sqrt{\frac{\alpha(1-\alpha)}{t}} \implies \Delta^2 = \frac{\alpha(1-\alpha)}{t} \implies t = \frac{\alpha(1-\alpha)}{\Delta^2}$$

Since $\alpha(1-\alpha)$ is a constant we can say that 

$$ t =  \mathcal{O}(1/\Delta^2)$$

On the other side for the amplitude estimation method, 
$$|\alpha - \alpha'| \leq \frac{2\pi\sqrt{\alpha(1-\alpha)}}{M}+\frac{\pi^2}{M^2}=\Delta,$$
which implies, $$M\in \mathcal{O}(1/\Delta) $$

\end{document}